\newcommand{\E}{\textnormal{E}}
\newcommand{\tr}{\textnormal{tr}}
\newcommand{\by}{\bm{y}}
\newcommand{\bS}{\bm{S}}
\newcommand{\bG}{\bm{G}}
\newcommand{\bW}{\bm{W}}
\newcommand{\bJ}{\bm{J}}
\newcommand{\bU}{\bm{U}}
\newcommand{\bX}{\bm{X}}
\newcommand{\bZ}{\bm{Z}}
\newcommand{\bI}{\bm{I}}
\newcommand{\bb}{\bm{b}}
\newcommand{\Normal}{\mathcal{N}}
\newcommand{\rank}{\textnormal{rank}}
\newtheorem{definition}{Definition}[section]
\definecolor{lblue1}{RGB}{160,160,255}
\definecolor{lblue2}{RGB}{210,210,255}
\definecolor{lblue}{RGB}{100,100,255}
\definecolor{lred1}{RGB}{255,160,160}
\definecolor{lred2}{RGB}{255,210,210}
\definecolor{lred3}{RGB}{255,240,240}
\definecolor{lred}{RGB}{255,100,100}
\definecolor{lpink}{RGB}{183,110,183}
\newtheorem{proposition}{Proposition}
\newtheorem{theorem}{Theorem}
\newtheorem{lemma}{Lemma}
\newcommandx{\shani}[2][1=]{\todo[linecolor=blue,backgroundcolor=blue!25,bordercolor=blue,#1]{#2}}
\newcommandx{\rob}[2][1=]{\todo[linecolor=red,backgroundcolor=red!25,bordercolor=red,#1]{#2}}
\title{Probabilistic forecast reconciliation under the Gaussian framework}
\author{Shanika L Wickramasuriya}
\begin{document}
\titlepage

\begin{abstract}
	
Forecast reconciliation of multivariate time series is the process of mapping a set of incoherent forecasts into coherent forecasts to satisfy a given set of linear constraints. Commonly used projection matrix based approaches for point forecast reconciliation are OLS (ordinary least squares), WLS (weighted least squares), and MinT (minimum trace). Even though point forecast reconciliation is a well-established field of research, the literature on generating probabilistic forecasts subject to linear constraints is somewhat limited. Available methods follow a two-step procedure. Firstly, it draws future sample paths from the univariate models fitted to each series in the collection (which are incoherent). Secondly, it uses a projection matrix based approach or empirical copula based reordering approach to account for contemporaneous correlations and linear constraints. The projection matrices are estimated either by optimizing a scoring rule such as energy or variogram score, or simply using a projection matrix derived for point forecast reconciliation. 

This paper proves that \begin{inparaenum}[(a)] \item if the incoherent predictive distribution is Gaussian then MinT minimizes the logarithmic scoring rule; and \item the logarithmic score of MinT for each marginal predictive density is smaller than that of OLS\end{inparaenum}. We show these theoretical results using a set of simulation studies. We also evaluate them using the Australian domestic tourism data set.

\end{abstract}

\begin{keywords}
Coherent; Forecast reconciliation; Hierarchical time series; Probabilistic forecasts; Projections, Scoring rules
\end{keywords}

\newpage

\section{Introduction}

Multivariate time series forecasting problems often have a set of linear constraints to be satisfied. For example, regional tourism demand (measured as the number of visitor nights spent away from home) must sum to the demand for state-level, which must then sum to the overall tourism demand of a country. We refer to these structures as hierarchical time series. A simple method to ensure these constraints is to forecast all the series at the most disaggregated level and then sum them to form forecasts for other aggregated series in the structure. We refer to this method as bottom-up (BU) \citep[see][among others]{orcetal68, dunetal76, shlwol79, pendal17, beretal20}. This method ignores the complicated relationships that exist between series in the structure. It can perform poorly on highly disaggregated data which have a low signal-to-noise ratio. 

While overcoming these difficulties, forecast reconciliation was proposed by \citet{hynetal11} and later developed by \citet{vancug15}, \citet{hynetal16}, \citet{benkoo19} and \citet{wicetal19} to achieve coherence in the forecasts for a given hierarchy. These methods firstly generate independent forecasts for each series (we refer to these as base forecasts). Secondly, they reconcile these to make them coherent (i.e., forecasts for the most disaggregated series follow the same set of linear constraints present in the data). \citet{hynetal11} formulated reconciliation as a regression model where the base forecasts were modeled as the sum of the expected values of the future outcomes and an error term. Other subsequent work formulated reconciliation as an optimization problem that intended to minimize various quadratic loss functions. Recently, \citet{wic2021a} established relationships that exist between the methods proposed in \citet{hynetal11}, \citet{benkoo19} and \citet{wicetal19}. \citet{panetal20} provided a geometrical interpretation to some of these methods by nesting them within the class of projections. 

One shortcoming of point forecasts is their inability to provide information about any departures from the predicted outcome, limiting their use in decision-making. As a consequence, probabilistic forecasts in the form of probability distributions over future quantities of interest have become widely used in many fields: economics \citep{cle04, ros14, cle18, liuetal21}, meteorology \citep{gneetal08, leupal08, sloetal13, leu19}, energy \citep{jeojam12, honetal16, benetal16}, and retail \citep{kol16, berretal20}. Even though reconciliation methods for point forecasts have been developed over the last decade, the literature on probabilistic forecast reconciliation is rather limited. \citet{sha17} proposed to compute point-wise prediction intervals for infant mortality rates using the maximum entropy bootstrapping method. They generated bootstrap samples for each series in the structure independently, and then the best fitted ARIMA (autoregressive integrated moving average) model was identified for each generated bootstrap sample. Assuming that the fitted models closely follow the true data generating process, the future sample paths were simulated. These were then reconciled using the orthogonal projection matrix proposed by \citet{hynetal11} for point forecast reconciliation. A drawback of this approach is that the bootstrapped samples do not account for the inherent contemporaneous correlations among the series nor satisfy the linear constraints present in the data.

\citet{benetal20} proposed an algorithm to compute coherent probabilistic forecasts in a bottom-up fashion. The algorithm obtains the forecast distribution of each aggregate series as a convolution of the forecast distributions of the corresponding disaggregated series, and dependencies between forecast distributions are incorporated through the use of empirical copulas. \citet{gam20} extended the definitions in point forecast reconciliation to probabilistic forecast reconciliation. The definitions they provided are general and can accommodate any continuous mapping from incoherent to coherent probabilistic forecasts. They provided conditions under which the linear mapping is a projection and favored an oblique projection similar to that of \citet{wicetal19} when the predictive density is Gaussian. They have also shown that for a coherent data generating process, the logarithmic scoring rule is improper with respect to the incoherent base probabilistic forecasts. Hence they recommended using energy or variogram score if we wish to compare incoherent and reconciled probabilistic forecasts. 

When the distributional assumptions are unlikely to make, \citet{gam20} proposed a non-parametric bootstrapping approach to generate future sample paths for each series in the structure and then make them reconcile using projections. Even though this method is similar to \citet{sha17}, it does not bootstrap the observed series. It assumes that the models fitted to observed data closely follow the true data generating process and then generates future sample paths from the fitted univariate models by block bootstrapping the in-sample residuals. In contrast to \citet{sha17}, this  method accounts for the contemporaneous correlations. The results of this study also tend to favor oblique type projection matrix similar to \citet{wicetal19}. Rather than using existing projection matrices for point forecast reconciliation, \citet{panetal20b} proposed to optimize either the energy or variogram score to find the reconciliation weights. They allowed for any linear mappings from incoherent to coherent probabilistic forecasts, which do not necessarily lead to a projection matrix. The simulations were performed for Gaussian and non-Gaussian (using copula) errors driving the ARIMA processes. In the experiments, incoherent probabilistic forecasts were drawn jointly from a multivariate Gaussian distribution with parameters given by incoherent point forecast and covariance matrix of the in-sample incoherent forecast errors, or following a non-parametric procedure in \citet{gam20}. The results revealed that the performances of using an oblique type projection matrix as in \citet{wicetal19} and score optimized mapping are similar. This pattern is also observed in the empirical application.  

In this paper, we intend to fill a few gaps in probabilistic forecast reconciliation. Firstly, we theoretically show that when the incoherent (base) predictive distribution is jointly Gaussian, then among all the projection matrices, the oblique projection matrix used in \citet{wicetal19} minimizes the logarithmic score for coherent predictive distribution. Secondly, we prove that the univariate logarithmic score after applying an oblique projection matrix is smaller than that from an orthogonal projection matrix used in \citet{hynetal11} for each marginal Gaussian reconciled predictive distribution in the structure.

The rest of the paper is structured as follows. Section~\ref{sec:priliminaries} presents the notations, definition of probabilistic forecast reconciliation, a review of projection matrices used in point forecast reconciliation setting, and scoring rules for evaluating probabilistic forecasts. In Section~\ref{sec:newtheory}, we introduce theoretical derivations. Section~\ref{sec:simulations} and \ref{sec:application} show the results from simulations and Australian domestic tourism data set, respectively. Section~\ref{sec:conclusion} conclude with a short discussion.

\section{Preliminaries}
\label{sec:priliminaries}
\subsection{Notation and definitions}

Let $\by_t \in \mathbb{R}^m$ be a vector of all observations collected at time $t$ from each series in the structure, and $\bb_t \in \mathbb{R}^n$ be a vector formed only using the observations collected at time $t$ from the most disaggregated level. These are connected via
\begin{align}
	\label{eq:obsstr}
	\by_t = \bS\bb_t,
\end{align}
where $\bS$ is of order $m \times n$ which consists of aggregation constraints (for hierarchical time series) present in the structure. Due to the constraints present in $\bS$, $\by_t$ lies in an $n$-dimensional subspace of $\mathbb{R}^m$ which we refer to as ``coherent subspace'' and denoted by $\mathfrak{s}$. This subspace is spanned by the columns of $\bS$.

To clarify these notations and relationships more clearly, consider the structure given in Figure~\ref{fig:tree}. Let's define a generic series within the structure as $X$, with $y_{X, t}$ denoting the value of series $X$ at time $t$ and $y_t$ being the aggregate of series in the most disaggregated level at time $t$.

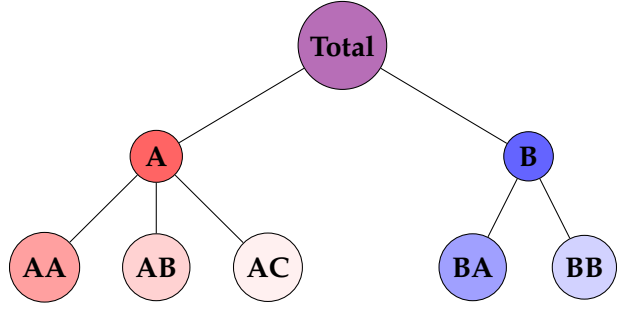
\begin{figure} \center
	\resizebox{0.5\textwidth}{0.25\textwidth}{%
		\begin{tikzpicture}
			\tikzstyle{every node}=[circle,draw,inner sep=3pt]
			\tikzstyle[level distance=.1cm]
			\tikzstyle[sibling distance=.1cm]
			\tikzstyle{level 1}=[sibling distance=50mm,font=\small]
			\tikzstyle{level 2}=[sibling distance=15mm,font=\footnotesize]
			\node[fill=lpink, font=\bfseries]{Total}
			child {node[fill=lred, font=\bfseries] {A}
				child {node[fill=lred1, font=\bfseries] {AA}}
				child {node[fill=lred2, font=\bfseries] {AB}}
				child {node[fill=lred3, font=\bfseries] {AC}}
			}
			child {node[fill=lblue, font=\bfseries] {B}
				child {node[fill=lblue1, font=\bfseries] {BA}}
				child {node[fill=lblue2, font=\bfseries] {BB}}
			};
	\end{tikzpicture}}
	\caption{An example of a two-level tree.}
	\label{fig:tree}
\end{figure}

For the structure given in Figure~\ref{fig:tree}, $m = 8$, $n = 5$, $\bb_t = [y_{AA, t}, y_{AB, t}, y_{AC, t}, y_{BA, t}, y_{BB, t}]^\top$, $\by_t = [y_t, y_{A, t}, y_{B, t}, y_{AA, t}, y_{AB, t}, y_{AC, t}, y_{BA, t}, y_{BB, t}]^\top$,  and
$$
\bS = \left[\begin{array}{ccccc}
	1 & 1 & 1 & 1 & 1\\
	1 & 1 & 1 & 0 & 0\\
	0 & 0 & 0 & 1 & 1\\
	& & \bI_5 & &
\end{array}
\right],
$$
where $\bI_k$ denotes an identity matrix of order $k \times k$.

These notations can be easily extended to any large collection of time series subject to any aggregation constraints. We should also emphasize that the definition of $\bS$ and $\bb_t$ can differ depending on the application \citep{sha17, jeoetal19}.

To describe the concept of coherence and probabilistic forecast reconciliation, we adapt the notations and formal definitions introduced in \citet{panetal20}.

Let $\left(\mathbb{R}^n, \mathscr{F}_{\mathbb{R}^n}, \mu\right)$ be a probability space, where $\mathscr{F}_{\mathbb{R}^n}$ is the Borel $\sigma$-algebra on $\mathbb{R}^n$. The triple can be assumed as the probabilistic forecast for the bottom-level series. Define $s: \mathbb{R}^n \rightarrow \mathbb{R}^m$ to be the premultiplication by $\bS$ that we noted in Eq.\ \eqref{eq:obsstr}. Then a $\sigma$-algebra $\mathscr{F}_{\mathfrak{s}}$ can be constructed from the collection of sets $s(\mathcal{B})$ for all $\mathcal{B} \in \mathscr{F}_{\mathbb{R}^n}$.

\begin{definition}[Coherent probabilistic forecasts]
	Given the triple, $\left(\mathbb{R}^n, \mathscr{F}_{\mathbb{R}^n}, \mu\right)$, we can define the coherent probability space, $\left(\mathfrak{s}, \mathscr{F}_{\mathfrak{s}}, \breve{\mu}\right)$ satisfying the following property:
	\begin{align*}
		\breve{\mu}(s(\mathcal{B})) = \mu(\mathcal{B}), \quad \forall \mathcal{B} \in \mathscr{F}_{\mathbb{R}^n}.
	\end{align*}
\end{definition} 

Let $\left(\mathbb{R}^m, \mathscr{F}_{\mathbb{R}^m}, \hat{\mu}\right)$ be a probability space referring to the incoherent probabilistic forecast for all $m$ series in the structure and $\psi: \mathbb{R}^m \rightarrow \mathbb{R}^n$ be a continuous mapping function.

\begin{definition}[Probabilistic forecast reconciliation]
	The reconciled probability measure of $\hat{\mu}$ with respect to $\psi$ is a probability measure $\tilde{\mu}$ on $\mathfrak{s}$ with $\sigma$-algebra $\mathscr{F}_{\mathfrak{s}}$ satisfying
	\begin{align*}
		\tilde{\mu}(\mathcal{A}) = \hat{\mu}(\psi^{-1}(\mathcal{A})), \quad \forall \mathcal{A} \in \mathscr{F}_{\mathfrak{s}},
	\end{align*}
	where $\psi^{-1}(\mathcal{A}) = \{\bm{x} \in \mathbb{R}^m: \psi(\bm{x}) \in \mathcal{A}\}$ representing the pre-image of $\mathcal{A}$.
\end{definition}

We can also define the mapping $\psi$ as a composition of two transformations, say $s \circ g$, where $g: \mathbb{R}^m \rightarrow \mathbb{R}^n$ is a continuous function. A few choices of $g$ from point forecasting literature are given in Table~\ref{tbl:g-matrices}, where $g$ involves premultiplication by a matrix $\bG \in \mathbb{R}^{n \times m}$ such that $\bm{SG}$ is a projection matrix.

\begin{table}
	\caption{\label{tbl:g-matrices} Point forecast reconciliation methods for which $\bS\bG$ is a projection matrix.}
	\centering
	{\renewcommand{\arraystretch}{1.3}
		\begin{threeparttable}
			\begin{tabular}{lrr}
				\toprule
				Reconciliation method & & $\bG$\\
				\midrule
				OLS [\citet{hynetal11}] & & $(\bS^\top\bS)^{-1}\bS^\top$\\
				WLS [\citet{hynetal16}] & & $(\bS^\top\hat{\bm{\Lambda}}_1^{-1}\bS)^{-1}\bS^\top\hat{\bm{\Lambda}}_1^{-1}$\\
				MinT(Sample) [\citet{wicetal19}] & & $(\bS^\top\hat{\bW}_{1, \text{sam}}^{-1}\bS)^{-1}\bS^\top\hat{\bW}_{1, \text{sam}}^{-1}$\\
				MinT(Shrink) [\citet{wicetal19}] & & $(\bS^\top\hat{\bW}_{1, \text{shr}}^{-1}\bS)^{-1}\bS^\top\hat{\bW}_{1, \text{shr}}^{-1}$\\
				\bottomrule
			\end{tabular}
			\begin{tablenotes}
				\item $\hat{\bW}_{1, \text{sam}}$ and $\hat{\bW}_{1, \text{shr}}$ are the sample, and shrinkage  \citep{schstr15} covariance matrix, respectively of $1$-step-ahead in-sample base forecast errors. $\hat{\bm{\Lambda}}_1 = \text{diag}(\hat{\bW}_{1, \text{sam}})$.
			\end{tablenotes}
	\end{threeparttable}}
\end{table}

\citet{gam20} showed that $\bS\bG$ is a projection matrix if and only if $\bS\bG\bS = \bS$ or equivalently, $\bG\bS = \bI_n$ holds. \citet{hynetal11} and \citet{wicetal19} treated these as a set of constraints ensuring unbiased reconciled forecasts provided that the base forecasts are unbiased. 

\subsection{Scoring rules}
\label{sec:scoring-rules}
This section briefly reviews the scoring rules that can evaluate the performance of different probabilistic forecast reconciliation methods. Scoring rules provide summary measures about the predictive performance of distributions. It addresses both calibration and sharpness simultaneously and provide a mechanism for ranking competing forecast methods.

Following \citet{gneetal08}, we define negatively oriented scoring rules that a forecaster wishes to minimize. Let $P$ be the forecaster's predictive distribution, $Z \sim Q$ for which $z \in \mathbb{R}$ is a realization. A scoring rule is defined as $s(P, z)$ and is said to be a \textit{proper} scoring rule if
$$
\E_Q[s(Q, z)] \leq \E_Q[s(P, z)],
$$
where $\E_Q[\cdot]$ denotes that the expectation is taken with respect to $Q$.

\subsubsection{Univariate scoring rules}
\label{sec:uniscore}

These scoring rules can evaluate the efficiency of marginal probabilistic reconciled forecasts. In this section, we discuss three scoring rules: logarithmic score, continuous ranked probability score and interval score. The first two scoring rules can evaluate the full predictive distribution. The last scoring rule is helpful to evaluate quantile predictions.

\subsubsection*{Logarithmic score (LS)}

This is the most widely used scoring rule when the predictive distribution $P$ has a known density function $p$. It is defined as
$$
\text{LS}(P, z) = - \log\ p(z).
$$
The logarithmic score places a strong penalty on low probability events and therefore can be more sensitive to outliers.

\subsubsection*{Continuous ranked probability score (CRPS)}

The continuous ranked probability score is defined as the squared difference between the predictive and the empirical cumulative distribution function (CDF), and is given by
$$\text{CRPS(P, z)} = \int_{-\infty}^{\infty} \left(P(x) - \mathbbm{1}(z \leq x)\right)^2 \text{d}x,
$$
where $\mathbbm{1}$ is the indicator function. For predictive CDFs with a finite first moment, CRPS can be written as
$$
\text{CRPS(P, z)} = \text{E}_P|X - z| - \frac{1}{2}E_P|X - X^*|,
$$
where $X$ and $X^*$ are independent random variables with distribution $P$.

For computing CRPS, closed form analytical expressions exist for most classical parametric distributions. For instances where CDFs are not available, the expectations can be approximated:
$$
\text{CRPS}(\hat{P}, z) =  \frac{1}{N}\sum_{i=1}^N|x_i - z| - \frac{1}{2N^2}\sum_{i=1}^N\sum_{j=1}^N|x_i - x_j|,
$$
where $\displaystyle \hat{P}(\omega) = \frac{1}{N}\sum_{i=1}^N\mathbbm{1}(x_i \leq \omega)$ and $x_1, x_2, \dots, x_N$ is a collection of $N$ random draws taken from the predictive distribution.

\subsubsection*{Interval score (IS)}
To evaluate the univariate central $100 \times (1 - \alpha) \%$ prediction intervals from various reconciliation methods, we can use an interval score defined by
$$
\text{IS}(l, u, \alpha; z) = (u - l) + \frac{2}{\alpha}(l - z)\mathbbm{1}(z < l) + \frac{2}{\alpha}(z - u)\mathbbm{1}(z > u),
$$
where $l$ and $u$ are the $\alpha/2$ and $1 - \alpha/2$ quantiles, respectively. This scoring rule tends to reward narrower prediction intervals while incurring a penalty if the observation does not captured by the interval.

\subsubsection{Multivariate scoring rules}

Univariate scoring rules cannot account for the dependencies that exist between the series in the structure. Therefore, we also consider three multivariate scoring rules: logarithmic score, energy score and variogram score.

\subsubsection*{Logarithmic score}

We have stated the expression for the logarithmic score in Section~\ref{sec:uniscore} and the only modification is we now need to substitute a multivariate predictive density. \citet{gam20} showed that the logarithmic score is improper with respect to the class of incoherent measures if the true data generating process is coherent. Hence we cannot make a reliable comparison between incoherent and coherent predictive densities. Another property of the logarithmic score is for any coherent density, the score for the entire structure differs from that for the most disaggregated level only by a fixed quantity which depends on $\bm{S}$. Therefore, if one probabilistic forecast reconciliation approach achieves a lower expected score than another approach, the same ordering is preserved for the entire structure.

\subsubsection*{Energy score (ES)}

The energy score is the multivariate generalization of CRPS and is defined by
$$
\text{ES}(P, \bm{z}) = \E_P\left\|\bm{X} - \bm{z}\right\|_2 - \frac{1}{2}\E_P\left\|\bm{X} - \bm{X}^*\right \|_2
$$
for $\text{E}\|\bm{X}\|_2$ is finite, where $\bm{z} \in \mathbb{R}^m$ is the observation vector, $\bm{X}, \bm{X}^* \in \mathbb{R}^m$ are independent random vectors with distribution $P$ and $\|\cdot\|_2$ is the $l_2$ norm. We generally use Monte Carlo methods when the analytical expressions for these expectations are not readily available:
$$
\text{ES}(\hat{P}, \bm{z}) = \frac{1}{N} \sum_{i=1}^{N} \left\|\bm{x}_i -\bm{z} \right\|_2 - \frac{1}{2(N-1)} \sum_{i=1}^{N-1}\left\|\bm{x}_i - \bm{x}_{i+1} \right\|_2,
$$
where $\bm{x}_1, \bm{x}_2, \dots, \bm{x}_N$ is a collection of $N$ random draws taken from the predictive distribution. This formulation is computationally more efficient than the multivariate extension of the empirical counterpart of CRPS \citep{gneetal08}.

\citet{pintas13} and \citet{schham15} noted in their studies that the discrimination ability of energy score to misspecified correlations can be limited.

\subsubsection*{Variogram score (VS)}

Overcoming the drawbacks of energy score, \citet{schham15} proposed an alternative score by considering the pairwise differences of the components of an $m$-dimensional vector. If $p$-th absolute moments are finite, then the variogram score of order $p$ is given by
$$
\text{VS}(P, \bm{z}) = \sum_{i=1}^m\sum_{j=1}^m w_{ij}\left(|z_i - z_j|^p - \text{E}_P|X_i - X_j|^p\right)^2,
$$
where $z_i$ is the $i$-th component of $\bm{z}$, $X_i$ is the $i$-th component of $\bm{X}$ having the distribution $P$ and $w_{ij}$ are non-negative weights. Similarly to the energy score, we approximate the expectation from the sample counterpart. The simulation results of \citet{schham15} suggested setting $p = 0.5$. We also set $w_{ij} = 1, \forall i, j$ in our experiments.

\section{Probabilistic forecast reconciliation under the Gaussian framework}
\label{sec:newtheory}

Let the $h$-step-ahead base probabilistic forecasts are given by $\Normal\left(\hat{\by}_{t+h|t}, \bW_h\right)$, where $\hat{\by}_{t+h|t} \in \mathbb{R}^m$ is the $h$-step-ahead base forecasts for each series in the structure, made using observations up to and including time $t$, and arranged in the same order as $\by_t$, and $\bW_h = \E\left[\by_{t+h} -\hat{\by}_{t+h|t}\right]\left[\by_{t+h} -\hat{\by}_{t+h|t}\right]^\top$. Suppose there exists a projection matrix $\bS\bG_h$ onto the column space of $\bS$ that gives $h$-step-ahead reconciled probabilistic density by $\Normal\left(\bS\bG_h\hat{\by}_{t+h|t}, \bS\bG_h\bW_h\bG_h^\top\bS^\top\right)$. As we know the parametric form of the density of probabilistic forecasts, we can use the logarithmic score to find the optimal choice of the $\bG_h$ matrix. In other words, we are interested in solving the following constrained optimization problem:
\begin{align}
	\label{eq:logscore}
	-\operatornamewithlimits{min}_{\bG_h}\E\left[\log\tilde{f}\left(\by_{t+h}\right)\right]\\
	\text{s.t.}\ \bG_h\bS = \bI_n, \nonumber
\end{align}
where $\tilde{f}(\cdot)$ is the density of the $h$-step-ahead reconciled forecasts.

\begin{lemma}
	Let $\bW_h$ be a positive definite matrix. Then $\bG_h\bW_h\bG_h^\top$ is also positive definite if $\bS\bG_h$ is a projection matrix onto the column space of $\bS$.
\end{lemma}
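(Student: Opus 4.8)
The plan is to reduce the claim to an elementary rank argument. By the characterisation recalled just above (due to \citet{gam20}), the hypothesis that $\bS\bG_h$ is a projection matrix onto the column space of $\bS$ is equivalent to $\bG_h\bS = \bI_n$. First I would extract from this the key structural fact: since $\bG_h \in \mathbb{R}^{n\times m}$ admits $\bS$ as a right inverse, it has full row rank, because $\rank(\bG_h) \geq \rank(\bG_h\bS) = \rank(\bI_n) = n$ while trivially $\rank(\bG_h) \leq n$. Consequently the transpose $\bG_h^\top$ has full column rank $n$, i.e.\ the linear map $\bm{x}\mapsto\bG_h^\top\bm{x}$ from $\mathbb{R}^n$ to $\mathbb{R}^m$ is injective; in particular $\bG_h^\top\bm{x}\neq\bm{0}$ whenever $\bm{x}\neq\bm{0}$.

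Next I would verify positive definiteness directly from the associated quadratic form. Symmetry of $\bG_h\bW_h\bG_h^\top$ is immediate. Take any nonzero $\bm{x}\in\mathbb{R}^n$ and set $\bm{v} = \bG_h^\top\bm{x}$; by the injectivity just established, $\bm{v}\neq\bm{0}$, and hence
\begin{align*}
	\bm{x}^\top\bG_h\bW_h\bG_h^\top\bm{x} = \bm{v}^\top\bW_h\bm{v} > 0,
\end{align*}
where the strict inequality uses that $\bW_h$ is positive definite. Since $\bm{x}$ was arbitrary, $\bG_h\bW_h\bG_h^\top$ is positive definite, which is exactly what is needed for the $n\times n$ covariance matrix of the reconciled density $\Normal(\bS\bG_h\hat{\by}_{t+h|t}, \bS\bG_h\bW_h\bG_h^\top\bS^\top)$ (restricted to the coherent subspace) to be nondegenerate, and thus for its logarithmic score to be well defined.

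I do not anticipate any real obstacle: the only step that deserves a moment's care is the implication $\bG_h\bS = \bI_n \Rightarrow \rank(\bG_h) = n$, and even this can be bypassed if one prefers to argue geometrically, noting that a projection onto the $n$-dimensional space $\mathfrak{s} = \textnormal{col}(\bS)$ necessarily has rank $n$, so that $n = \rank(\bS\bG_h) \leq \rank(\bG_h) \leq n$, after which the quadratic-form computation proceeds identically.
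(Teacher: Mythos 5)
Your proof is correct and rests on the same key fact as the paper's: that $\bG_h$ has full row rank $n$, combined with the positive definiteness of $\bW_h$. The only differences are cosmetic --- you obtain $\rank(\bG_h)=n$ directly from the right-inverse identity $\bG_h\bS=\bI_n$ rather than via the trace of the idempotent matrix $\bS\bG_h$, and you conclude by evaluating the quadratic form $\bm{x}^\top\bG_h\bW_h\bG_h^\top\bm{x}$ directly instead of the paper's rank/null-space argument on $\bW_h^{1/2}\bG_h^\top$; both routes are sound and of comparable length.
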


\begin{proof}
	As projection matrices are idempotent
	$$\rank\left(\bS\bG_h\right) = \tr\left(\bS\bG_h\right) = \tr\left(\bG_h\bS\right) = \tr\left(\bI_n\right) = n,$$
	where $\tr(\cdot)$ denotes the trace of a square matrix.
	
	On the other hand, $\rank\left(\bS\right) = n$, hence
	$$\rank\left(\bS\bG_h\right) = \rank\left(\bG_h\right) = n.$$
	The null-space of $\bG_h\bW_h\bG_h^\top$ and $\bW_h^{1/2}\bG_h^\top$ are equivalent, giving
	$$\rank\left(\bG_h\bW_h\bG_h^\top\right) = \rank\left(\bW_h^{1/2}\bG_h^\top\right) = \rank\left(\bG_h\right) = n.$$
	Therefore, $\bG_h\bW_h\bG_h^\top$ is full-rank and positive definite.
\end{proof}

\begin{theorem}
	\label{thm:minlogs}
	Let $\bW_h$ be a positive definite matrix. The optimal $\bG_h$ matrix which minimizes Eq.\ \eqref{eq:logscore} subject to $\bG_h\bS = \bI_n$ is given by
	$$\bG_h^* = \left(\bS^\top\bW_h^{-1}\bS\right)^{-1}\bS^\top\bW_h^{-1}.$$
\end{theorem}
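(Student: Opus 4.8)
The plan is to reduce the constrained minimization over $\bG_h$ to a determinant-minimization problem and then close it with a Gauss--Markov type argument. First I would rewrite the logarithmic score on the bottom level. Since $\bS\bG_h\bW_h\bG_h^\top\bS^\top$ has rank $n$, the reconciled density $\tilde f$ is supported on the $n$-dimensional coherent subspace $\mathfrak{s}$; parametrizing $\mathfrak{s}$ by its bottom-level coordinates $\bb\mapsto\bS\bb$, the reconciled law becomes the non-degenerate $\Normal\!\bigl(\bG_h\hat{\by}_{t+h|t},\,\bG_h\bW_h\bG_h^\top\bigr)$ distribution on $\mathbb{R}^n$ (non-degeneracy is exactly Lemma~1, which applies because $\bG_h\bS=\bI_n$ forces $\bS\bG_h$ to be a projection onto the column space of $\bS$), and $-\log\tilde f(\by_{t+h})$ differs from the negative log of this bottom-level density only by an additive constant determined by $\bS$ alone --- the offset recorded at the end of the discussion of the multivariate logarithmic score in Section~\ref{sec:newtheory}. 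Writing $\be_{t+h}:=\by_{t+h}-\hat{\by}_{t+h|t}$ and expanding the Gaussian log-density, the objective in Eq.~\eqref{eq:logscore} therefore equals, up to a term not involving $\bG_h$,
\[
\tfrac12\log\det\!\bigl(\bG_h\bW_h\bG_h^\top\bigr)\;+\;\tfrac12\,\E\!\left[\bm{r}^\top\bigl(\bG_h\bW_h\bG_h^\top\bigr)^{-1}\bm{r}\right],\qquad \bm{r}:=\bb_{t+h}-\bG_h\hat{\by}_{t+h|t}.
\]

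Next I would show that the quadratic-form term is in fact independent of $\bG_h$. Coherence of the realized future, $\by_{t+h}=\bS\bb_{t+h}$, together with the constraint $\bG_h\bS=\bI_n$, gives $\bm{r}=\bG_h\bS\bb_{t+h}-\bG_h\hat{\by}_{t+h|t}=\bG_h\be_{t+h}$; hence, by the cyclic property of the trace and the definition $\bW_h=\E\bigl[\be_{t+h}\be_{t+h}^\top\bigr]$,
\[
\E\!\left[\bm{r}^\top\bigl(\bG_h\bW_h\bG_h^\top\bigr)^{-1}\bm{r}\right]=\tr\!\left[\bigl(\bG_h\bW_h\bG_h^\top\bigr)^{-1}\bG_h\bW_h\bG_h^\top\right]=n.
\]
So minimizing Eq.~\eqref{eq:logscore} over $\{\bG_h:\bG_h\bS=\bI_n\}$ is equivalent to minimizing $\log\det\!\bigl(\bG_h\bW_h\bG_h^\top\bigr)$, equivalently $\det\!\bigl(\bG_h\bW_h\bG_h^\top\bigr)$, over the same set.

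Finally I would establish the Loewner inequality $\bG_h\bW_h\bG_h^\top\succeq\bigl(\bS^\top\bW_h^{-1}\bS\bigr)^{-1}$ for every feasible $\bG_h$, with equality only at $\bG_h^*$. One checks directly that $\bG_h^*$ is feasible, $\bG_h^*\bS=\bigl(\bS^\top\bW_h^{-1}\bS\bigr)^{-1}\bS^\top\bW_h^{-1}\bS=\bI_n$, and that $\bG_h^*\bW_h\bG_h^{*\top}=\bigl(\bS^\top\bW_h^{-1}\bS\bigr)^{-1}$. For an arbitrary feasible $\bG_h$, write $\bG_h=\bG_h^*+\bm{D}$; feasibility forces $\bm{D}\bS=\bm{0}$, which annihilates the cross terms because $\bG_h^*\bW_h\bm{D}^\top=\bigl(\bS^\top\bW_h^{-1}\bS\bigr)^{-1}\bS^\top\bm{D}^\top=\bigl(\bS^\top\bW_h^{-1}\bS\bigr)^{-1}\bigl(\bm{D}\bS\bigr)^\top=\bm{0}$. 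Therefore $\bG_h\bW_h\bG_h^\top=\bigl(\bS^\top\bW_h^{-1}\bS\bigr)^{-1}+\bm{D}\bW_h\bm{D}^\top$, and $\bm{D}\bW_h\bm{D}^\top\succeq\bm{0}$ since $\bW_h$ is positive definite, with $\bm{D}\bW_h\bm{D}^\top=\bm{0}$ iff $\bm{D}=\bm{0}$. Both matrices being positive definite (Lemma~1) and $\det$ being strictly monotone with respect to the Loewner order on positive definite matrices, we obtain $\log\det\!\bigl(\bG_h\bW_h\bG_h^\top\bigr)\ge\log\det\bigl(\bS^\top\bW_h^{-1}\bS\bigr)^{-1}$ with equality iff $\bm{D}=\bm{0}$, i.e.\ iff $\bG_h=\bG_h^*$. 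This identifies $\bG_h^*=\bigl(\bS^\top\bW_h^{-1}\bS\bigr)^{-1}\bS^\top\bW_h^{-1}$ as the unique minimizer, as claimed.

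The step I expect to be the main obstacle is the first one: making the logarithmic score well defined on the degenerate coherent subspace and justifying cleanly that, after passing to bottom-level coordinates, the objective is $\tfrac12\log\det(\bG_h\bW_h\bG_h^\top)$ plus quantities that are either constant in $\bG_h$ or (as in the second step) have $\bG_h$-free expectation. Once that reduction is in hand, the remainder is the standard Gauss--Markov computation. It is also worth stating explicitly the assumptions under which the expectation in Eq.~\eqref{eq:logscore} is taken --- that the realized $\by_{t+h}$ is coherent and that $\bW_h=\E[\be_{t+h}\be_{t+h}^\top]$ --- since these are precisely what make the quadratic-form term collapse to the constant $n$ and hence leave only the log-determinant to be minimized.
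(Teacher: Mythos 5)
Your proof is correct, and the two halves relate to the paper's argument in different ways. The reduction of the expected score to $\tfrac12\log\det\bigl(\bG_h\bW_h\bG_h^\top\bigr)$ plus a $\bG_h$-free constant is essentially the paper's first step, though you reach it more cleanly: the paper works with the degenerate $m$-dimensional density directly, using the pseudo-determinant identity $\log\det\bigl(\bS\bG_h\bW_h\bG_h^\top\bS^\top\bigr)=\log\det\bigl(\bS^\top\bS\bigr)+\log\det\bigl(\bG_h\bW_h\bG_h^\top\bigr)$ and an explicit formula for the pseudo-inverse to collapse the quadratic form, whereas you pass to bottom-level coordinates first and get the same two terms with less matrix algebra (your identity $\bm{r}=\bG_h\be_{t+h}$, which needs the coherence of the realization $\by_{t+h}=\bS\bb_{t+h}$ plus $\bG_h\bS=\bI_n$, is exactly the step the paper performs inside the trace). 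Where you genuinely diverge is the optimization itself: the paper parametrizes the feasible set as $\bG_h=\bJ+\bX_h\bU^\top$ with $\bU^\top\bS=\bm{0}$, solves the first-order condition for $\bX_h$, and checks positive definiteness of the Hessian at the critical point; you instead prove the Loewner bound $\bG_h\bW_h\bG_h^\top=\bigl(\bS^\top\bW_h^{-1}\bS\bigr)^{-1}+\bm{D}\bW_h\bm{D}^\top\succeq\bigl(\bS^\top\bW_h^{-1}\bS\bigr)^{-1}$ via the vanishing cross term $\bG_h^*\bW_h\bm{D}^\top=\bigl(\bS^\top\bW_h^{-1}\bS\bigr)^{-1}(\bm{D}\bS)^\top=\bm{0}$ and invoke strict monotonicity of $\det$ on the positive definite cone. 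Your route is more elementary (no matrix calculus, no commutation matrices), delivers uniqueness of the minimizer directly from the equality case $\bm{D}\bW_h\bm{D}^\top=\bm{0}\Leftrightarrow\bm{D}=\bm{0}$, and as a byproduct gives the stronger Loewner-order statement that underlies the paper's Proposition 1 on marginal scores; the paper's calculus approach buys explicit machinery (the $\bJ$, $\bU$ parametrization and Hessian) that it reuses elsewhere. One small caveat: the additive offset you invoke when passing to bottom-level coordinates is the Jacobian factor $\tfrac12\log\det\bigl(\bS^\top\bS\bigr)$, which depends only on $\bS$ --- worth stating explicitly rather than by reference, since the remark you cite appears in the scoring-rules section rather than where you place it.
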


\begin{proof}
	Let's consider the logarithmic score of the density of the $h$-step-ahead reconciled forecasts:
	\begin{align}
		\label{eq:recondens}
		-\log\tilde{f}\left(\by_{t+h}\right) & = \frac{n}{2}\log\left(2\pi\right) + \frac{1}{2}\log\det\left(\bS\bG_h\bW_h\bG_h^\top\bS^\top\right) + \nonumber \\
		& \qquad \frac{1}{2} \left(\by_{t+h} - \bS\bG_h\hat{\by}_{t+h|t}\right)^\top\left(\bS\bG_h\bW_h\bG_h^\top\bS^\top\right)^-\left(\by_{t+h} - \bS\bG_h\hat{\by}_{t+h|t}\right),
	\end{align}
	where $\det(\bm{A})$ and $\bm{A}^-$ denote the pseudo determinant and pseudo inverse of the positive semi-definite matrix $\bm{A}$, respectively.
	
	Consider the second term in Eq.\ \eqref{eq:recondens}:
	\begin{align*}
		\log\det\left(\bS\bG_h\bW_h\bG_h^\top\bS^\top\right) & = \log\det\left(\bS^\top\bS\bG_h\bW_h\bG_h^\top\right)\\
		& = \log\left[\det\left(\bS^\top\bS\right)\det\left(\bG_h\bW_h\bG^\top_h\right)\right]\\
		& = \log\left[\det\left(\bS^\top\bS\right)\right] + \log\left[\det\left(\bG_h\bW_h\bG^\top_h\right)\right].
	\end{align*}
	The first equality follows from the fact that $\bS\bG_h\bW_h\bG_h^\top\bS^\top$ and $\bS^\top\bS\bG_h\bW_h\bG_h^\top$ are isospectral (i.e., both quantities share the same non-zero eigenvalues). The second equality follows from the fact that $\bS^\top\bS$ and $\bG_h\bW_h\bG_h^\top$ are symmetric and positive definite matrices.
	
	Consider the third term in Eq.\ \eqref{eq:recondens}:
	\begin{align*}
		\left(\by_{t+h} - \right. & \left. \bS\bG_h\hat{\by}_{t+h|t}\right)^\top \left(\bS\bG_h\bW_h\bG_h^\top\bS^\top\right)^-\left(\by_{t+h} - \bS\bG_h\hat{\by}_{t+h|t}\right) \\
		& = \tr\left[\left(\by_{t+h} - \bS\bG_h\hat{\by}_{t+h|t}\right)\left(\by_{t+h} - \bS\bG_h\hat{\by}_{t+h|t}\right)^\top\left(\bS\bG_h\bW_h\bG_h^\top\bS^\top\right)^-\right]\\
		& = \tr\left[\bS\bG_h\left(\by_{t+h} - \hat{\by}_{t+h|t}\right)\left(\by_{t+h} - \hat{\by}_{t+h|t}\right)^\top\bG_h^\top\bS^\top\left(\bS\bG_h\bW_h\bG_h^\top\bS^\top\right)^-\right]\\
		& = \tr\left[\bS\bG_h\left(\by_{t+h} - \hat{\by}_{t+h|t}\right)\left(\by_{t+h} - \hat{\by}_{t+h|t}\right)^\top\bG_h^\top\bS^\top \right. \\
		& \qquad \qquad \left. \bS\left(\bS^\top\bS\right)^{-1}\left(\bG_h\bW_h\bG_h^\top\bS^\top\bS\bG_h\bW_h\bG_h^\top\right)^{-1}\bG_h\bW_h\bG_h^\top\bS^\top\right]\\
		& = \tr\left[\bS\bG_h\left(\by_{t+h} - \hat{\by}_{t+h|t}\right)\left(\by_{t+h} - \hat{\by}_{t+h|t}\right)^\top\bG_h^\top \left(\bG_h\bW_h\bG_h^\top\right)^{-1} \right. \\
		& \left. \qquad \qquad \left(\bS^\top\bS\right)^{-1}\left(\bG_h\bW_h\bG_h^\top\right)^{-1}\bG_h\bW_h\bG_h^\top\bS^\top \right]\\
		& = \tr\left[\bG_h\left(\by_{t+h} - \hat{\by}_{t+h|t}\right)\left(\by_{t+h} - \hat{\by}_{t+h|t}\right)^\top\bG_h^\top \left(\bG_h\bW_h\bG_h^\top\right)^{-1}\right].
	\end{align*}
	
	The third equality follows from Fact 6.4.8 of \citet{ber05}. The logarithmic score of the predictive density can be rewritten as
	\begin{align*}
		-\log\tilde{f}\left(\by_{t+h}\right) & = K + \frac{1}{2}\log\left[\det\left(\bG_h\bW_h\bG^\top_h\right)\right] + \\
		& \qquad \qquad \tr\left[\bG_h\left(\by_{t+h} - \hat{\by}_{t+h|t}\right)\left(\by_{t+h} - \hat{\by}_{t+h|t}\right)^\top\bG_h^\top \left(\bG_h\bW_h\bG_h^\top\right)^{-1}\right],
	\end{align*}
	where $\displaystyle K = \frac{n}{2}\log(2\pi) + \frac{1}{2}\log\left[\det\left(\bS^\top\bS\right)\right]$.
	
	The expected logarithmic score becomes
	\begin{align*}
		-\E\left[\log\tilde{f}(\by_{t+h})\right] & = K + \frac{1}{2}\log\left[\det\left(\bG_h\bW_h\bG^\top_h\right)\right] + \\
		& \qquad \frac{1}{2}\tr\left[\bG_h\E\left[\left(\by_{t+h} - \hat{\by}_{t+h|t}\right)\left(\by_{t+h} - \hat{\by}_{t+h|t}\right)^\top\right]\bG_h^\top \left(\bG_h\bW_h\bG_h^\top\right)^{-1}\right]\\
		& = K + \frac{1}{2}\log\left[\det\left(\bG_h\bW_h\bG^\top_h\right)\right] + \frac{1}{2}\tr\left[\bG_h\bW_h\bG_h^\top \left(\bG_h\bW_h\bG_h^\top\right)^{-1}\right]\\
		& = K + \frac{n}{2} + \frac{1}{2}\log\left[\det\left(\bG_h\bW_h\bG^\top_h\right)\right].
	\end{align*}

	Therefore, the constrained minimization problem given in Eq. \eqref{eq:logscore} can be restated as
	\begin{align}
		\label{eq:reducedlscore}
		\operatornamewithlimits{min}_{\bG_h} \frac{1}{2}\log\left[\det\left(\bG_h\bW_h\bG^\top_h\right)\right]\\
		\text{s.t.}\ \bG_h\bS = \bI_n. \nonumber
	\end{align}
	
	We decompose $\bG_h$ as given below such that the constraints are always satisfied:
	$$\bG_h = \bJ + \bX_h\bU^\top,$$
	where $ \bS^\top = \begin{pmat}[|] \bm{C}^\top & \bI_n\cr\end{pmat}$, ~
	$\bm{J} = \begin{pmat}[|] \bm{0}_{n \times m^{*}} & \bI_n\cr\end{pmat}$, ~
	$\bm{U}^\top = \begin{pmat}[|]\bI_{m^{*}}  & -\bm{C}\cr\end{pmat}$, 
	$\bX_h \in \mathbb{R}^{n \times m*}$ and $m^{*} = m - n$.
	The reason for approaching on this manner is to avoid using Lagrange multipliers in the objective function. The unconstrained minimization problem then becomes
	\begin{align}
		\label{eq:uncons}
		\operatornamewithlimits{min}_{\bX_h} \mathcal{L}\left(\bX_h\right) = \operatornamewithlimits{min}_{\bX_h} \frac{1}{2} \log\det\left[\left(\bJ + \bX_h\bU^\top\right)\bW_h\left(\bJ + \bX_h\bU^\top\right)^\top\right].
	\end{align}
	The first order condition of Eq.\ \eqref{eq:uncons} gives
	\begin{align*}
		\frac{\partial}{\partial \bX_h}\mathcal{L}(\bX_h) = \bG_h^*\bW_h\bG_h^{*\top}\bG_h^*\bW_h\bU = \bm{0},
	\end{align*}
	where $\bG_h^* = \bJ + \bX_h^*\bU^\top$ and $\bX_h^*$ is the critical point.
	As $\bG_h^*\bW_h\bG_h^{*\top}$ is invertible,
	\begin{align*}
		\bG_h^*\bW_h\bU & = \bm{0} \\
		\bX_h^* & = -\bJ\bW_h\bU\left(\bU^\top\bW_h\bU\right)^{-1}.
	\end{align*}
	This leads to
	$$\bG_h^* = \bJ - \bJ\bW_h\bU\left(\bU^\top\bW_h\bU\right)^{-1}\bU^\top,$$
	which can also be written as
	\begin{align*}
		\bG_h^* = \left(\bS^\top\bW_h^{-1}\bS\right)^{-1}\bS^\top\bW_h^{-1}.
	\end{align*}
	
	We then evaluate the Hessian of $\mathcal{L}\left(\bX_h\right)$ to ensure that $\bX^*_h$ corresponds to a minimum. Let $\bZ_h = \bG_h\bW_h\bG_h^\top$. The Hessian of $\mathcal{L}\left(\bX_h\right)$, $\textnormal{H}\left[\mathcal{L}\left(\bX_h\right)\right]$ is
	\begin{align*}
		\bU^\top\bW_h\bU \otimes \bZ_h^{-1} & - \bm{K}_{m*n}\left(\bZ_h^{-1}\bG_h\bW_h\bU \otimes \bU^\top\bW_h\bG_h^\top\bZ_h^{-1}\right) - \\
		& \qquad \bU^\top\bW_h\bG^\top_h\bZ_h^{-1}\bG_h\bW_h\bU \otimes \bZ_h^{-1},
	\end{align*}
	where $\bm{K}_{m*n}$ is the commutation matrix. The Hessian evaluated at the critical point is given by
	\begin{align*}
		\textnormal{H}\left[\mathcal{L}\left(\bX^*_h\right)\right] = \bU^\top\bW_h\bU \otimes \bZ_h^{*-1},
	\end{align*}
	where $\bZ^*_h = \bG^*_h\bW_h\bG_h^{*\top}$. Both $\bU^\top\bW_h\bU$ and $\bZ_h^*$ are positive definite. The Hessian is also positive definite as the Kronecker product of two positive definite matrices is also positive definite. There is only one critical point, $\bX^*_h$, hence it corresponds to the global minimum of the optimization problem given in Eq.\ \eqref{eq:uncons}.
\end{proof}

\begin{proposition}
	Under the Gaussian assumption, the expected logarithmic score for the reconciled marginal predictive density of a given series in the structure is smaller for MinT than OLS.
\end{proposition}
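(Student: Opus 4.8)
The plan is to reduce the comparison of expected marginal logarithmic scores to a comparison of the corresponding diagonal entries of the reconciled covariance matrices, and then to establish a Loewner-order inequality between those covariance matrices.

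\textbf{Step 1: reduce to a variance comparison.} Fix a series $i$ and let $\tilde f_i$ denote its reconciled marginal density. Under the Gaussian assumption, $\tilde f_i$ is the density of $\Normal\big([\bS\bG_h\hat{\by}_{t+h|t}]_i,\ \sigma_i^2\big)$ with $\sigma_i^2 = [\bS\bG_h\bW_h\bG_h^\top\bS^\top]_{ii}$ (this is positive because, by the preceding lemma, $\bG_h\bW_h\bG_h^\top$ is positive definite and $\bG_h^\top$ is injective on the nonzero vector $\bS^\top\be_i$). Hence $-\log\tilde f_i(y_{i,t+h}) = \tfrac12\log(2\pi) + \tfrac12\log\sigma_i^2 + \tfrac{1}{2\sigma_i^2}\big(y_{i,t+h} - [\bS\bG_h\hat{\by}_{t+h|t}]_i\big)^2$. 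To take the expectation I would use coherence of the evaluating process: $\by_{t+h} = \bS\bb_{t+h}$, so $\bS\bG_h\by_{t+h} = \by_{t+h}$ since $\bG_h\bS = \bI_n$, and therefore $\by_{t+h} - \bS\bG_h\hat{\by}_{t+h|t} = \bS\bG_h(\by_{t+h} - \hat{\by}_{t+h|t})$. With $\bW_h = \E\big[(\by_{t+h} - \hat{\by}_{t+h|t})(\by_{t+h} - \hat{\by}_{t+h|t})^\top\big]$ as in the theorem, this gives $\E\big[(y_{i,t+h} - [\bS\bG_h\hat{\by}_{t+h|t}]_i)^2\big] = [\bS\bG_h\bW_h\bG_h^\top\bS^\top]_{ii} = \sigma_i^2$, so the expected marginal logarithmic score equals $\tfrac12\log(2\pi) + \tfrac12 + \tfrac12\log\sigma_i^2$, a strictly increasing function of $\sigma_i^2$. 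It therefore suffices to show $[\bS\bG_h^{\mathrm{MinT}}\bW_h(\bG_h^{\mathrm{MinT}})^\top\bS^\top]_{ii} \le [\bS\bG_h^{\mathrm{OLS}}\bW_h(\bG_h^{\mathrm{OLS}})^\top\bS^\top]_{ii}$, where $\bG_h^{\mathrm{MinT}} = (\bS^\top\bW_h^{-1}\bS)^{-1}\bS^\top\bW_h^{-1}$ (the optimal matrix from Theorem~\ref{thm:minlogs}) and $\bG_h^{\mathrm{OLS}} = (\bS^\top\bS)^{-1}\bS^\top$.

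\textbf{Step 2: Loewner-order minimality and conclusion.} I would prove the stronger statement that $\bS\bG_h\bW_h\bG_h^\top\bS^\top$ is minimized, in the Loewner order, over all $\bG_h$ with $\bG_h\bS = \bI_n$, by $\bG_h^* := \bG_h^{\mathrm{MinT}}$ — a Gauss--Markov/GLS argument. Writing any feasible $\bG_h = \bG_h^* + \bm{D}_h$, the constraint forces $\bm{D}_h\bS = \bm{0}$, and the cross term vanishes: $\bG_h^*\bW_h\bm{D}_h^\top = (\bS^\top\bW_h^{-1}\bS)^{-1}\bS^\top\bm{D}_h^\top = (\bS^\top\bW_h^{-1}\bS)^{-1}(\bm{D}_h\bS)^\top = \bm{0}$. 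Hence $\bG_h\bW_h\bG_h^\top = \bG_h^*\bW_h(\bG_h^*)^\top + \bm{D}_h\bW_h\bm{D}_h^\top$, and pre/post-multiplying by $\bS$ and $\bS^\top$ gives
\begin{align*}
	\bS\bG_h\bW_h\bG_h^\top\bS^\top - \bS\bG_h^*\bW_h(\bG_h^*)^\top\bS^\top = \bS\bm{D}_h\bW_h\bm{D}_h^\top\bS^\top \succeq \bm{0}.
\end{align*}
Since $\bG_h^{\mathrm{OLS}}$ is feasible, taking $(i,i)$ entries yields $\sigma^2_{i,\mathrm{OLS}} - \sigma^2_{i,\mathrm{MinT}} = \|\bW_h^{1/2}(\bm{D}_h^{\mathrm{OLS}})^\top\bS^\top\be_i\|_2^2 \ge 0$ with $\bm{D}_h^{\mathrm{OLS}} = \bG_h^{\mathrm{OLS}} - \bG_h^*$, and combined with Step~1 this gives the claimed ordering of expected marginal logarithmic scores. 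I would also note that the inequality is strict unless $(\bm{D}_h^{\mathrm{OLS}})^\top\bS^\top\be_i = \bm{0}$ (e.g. it is strict for every $i$ when $\bW_h$ is not proportional to $\bI_m$ in the way that makes OLS coincide with MinT).

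\textbf{Main obstacle.} The only genuinely delicate point is the expectation step: one must be explicit that the evaluating (``true'') distribution is coherent, which is exactly what collapses $\by_{t+h} - \bS\bG_h\hat{\by}_{t+h|t}$ to $\bS\bG_h(\by_{t+h} - \hat{\by}_{t+h|t})$ and lets $\E[(y_{i,t+h} - [\bS\bG_h\hat{\by}_{t+h|t}]_i)^2]$ be identified with $\sigma_i^2$; without coherence the quadratic term would not simplify and the expected log score would not reduce to a monotone function of $\sigma_i^2$ alone. The Loewner-order step is then routine once the cross term is shown to vanish, and monotonicity of $x\mapsto\log x$ closes the argument.
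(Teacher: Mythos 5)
Your proposal is correct, and its first step is essentially the paper's: both reduce the expected marginal logarithmic score to $\tfrac12\log(2\pi)+\tfrac12+\tfrac12\log\sigma_i^2$ with $\sigma_i^2=\bm{S}_i^\top\bG_h\bW_h\bG_h^\top\bm{S}_i$, so that everything hinges on comparing these variances (you are more explicit than the paper about the point you flag as delicate, namely that coherence of the evaluating process is what collapses $\by_{t+h}-\bS\bG_h\hat{\by}_{t+h|t}$ to $\bS\bG_h(\by_{t+h}-\hat{\by}_{t+h|t})$; the paper uses this silently, as it did in the proof of Theorem~\ref{thm:minlogs}). Where you genuinely diverge is the second step: the paper simply cites Theorem~1 of \citet{wic2021a} for the inequality $\bm{S}_i^\top\bG_{OLS}\bW_h\bG_{OLS}^\top\bm{S}_i\ge\bm{S}_i^\top\bG_{MinT,h}\bW_h\bG_{MinT,h}^\top\bm{S}_i$, whereas you prove it from scratch by the Gauss--Markov decomposition $\bG_h=\bG_h^*+\bm{D}_h$ with $\bm{D}_h\bS=\bm{0}$, vanishing cross term, and hence
\begin{align*}
	\bS\bG_h\bW_h\bG_h^\top\bS^\top-\bS\bG_h^*\bW_h\bG_h^{*\top}\bS^\top=\bS\bm{D}_h\bW_h\bm{D}_h^\top\bS^\top\succeq\bm{0}.
\end{align*}
Your route buys a self-contained argument and a strictly stronger conclusion --- Loewner-order minimality of the whole reconciled covariance over \emph{all} feasible $\bG_h$, not just a diagonal comparison against OLS --- together with an explicit characterization of when the inequality is strict; the paper's route is shorter but leans on an external result. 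Both are valid, and your cross-term computation and feasibility check for $\bG_h^{\mathrm{OLS}}$ are correct.
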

\begin{proof}
	Let $\bS^\top = \begin{pmat}[{|||}] \bm{S}_1 & \bm{S}_2 & \dots & \bm{S}_m \cr \end{pmat}$. For a given projection matrix $\bS\bG_h$, the logarithmic score of the $h$-step-ahead marginal reconciled predictive density of a series $X_i$ in the structure is given by
	\begin{align*}
		-\log\tilde{f}(y_{X_i, t+h}) & = \frac{1}{2}\log(2\pi) + \frac{1}{2}\log(\bm{S}_i^\top\bG_h\bW_h\bG^\top\bm{S}_i) + \frac{\left(y_{X_i, t+h} - \bm{S}_i^\top\bG_h\hat{y}_{t+h|t}\right)^2}{2\bm{S}_i^\top\bG_h\bW_h\bG^\top\bm{S}_i},
	\end{align*}
	where $\bm{S}_i$ denotes the aggregation constraint corresponds to series $X_i$ for $i = 1, 2, \dots, m$.
	
	The expected logarithmic score becomes:
	\begin{align*}
		-\E[\log\tilde{f}(y_{X_i, t+h})] & = \frac{1}{2}\log(2\pi) + \frac{1}{2}\log(\bm{S}_i^\top\bG_h\bW_h\bG^\top\bm{S}_i) + \\
		& \qquad \qquad \frac{\bm{S}_i^\top\bG_h\E[\left(y_{t+h} - \hat{y}_{t+h|t}\right)\left(y_{t+h} - \hat{y}_{t+h|t}\right)^\top]\bG_h^\top\bS^\top}{2\bm{S}_i^\top\bG_h\bW_h\bG^\top\bm{S}_i}\\
		& = K + \log(\bm{S}_i^\top\bG_h\bW_h\bG^\top\bm{S}_i),
	\end{align*}
	where $K = \frac{1}{2}(\log(2\pi) + 1)$. Using Theorem 1 from \citet{wic2021a}, we know that
	$$
	\bm{S}_i^\top\bG_{OLS}\bW_h\bG_{OLS}^\top\bm{S}_i \geq \bm{S}_i^\top\bG_{MinT, h}\bW_h\bG_{MinT, h}^\top\bm{S}_i,
	$$
	hence the expected logarithmic score for MinT is smaller than that for OLS.
\end{proof}

\section{Simulations}
\label{sec:simulations}

To evaluate the performance of different reconciliation methods on predictive distributions, we follow the same simulation setups in \citet{wic2021a} assuming that the base predictive distribution for the series in the structure is jointly Gaussian.

\subsection*{Setup 1: Exploring the effect of correlation}

We consider a hierarchy with four series at the bottom level, which are then aggregated in groups of size two to form all the aggregated series. The structure has seven series in total. We assume a stationary first-order autoregressive (i.e. VAR(1)) process to generate the observations at the bottom level:
\begin{align*}
	\bm{b}_t & =
	\begin{bmatrix}
		\bm{A}_1 & \bm{0}\\
		\bm{0} & \bm{A}_2
	\end{bmatrix} \bm{b}_{t-1} + \bm{\varepsilon}_t,
\end{align*}
where $\bm{A}_1$ and $\bm{A}_2$ are $2 \times 2$ matrices with eigenvalues $z_{1,2} = 0.6[\cos(\pi/3) \pm i \sin(\pi/3)]$ and  $z_{3, 4} = 0.9[\cos(\pi/6) \pm i \sin(\pi/6)]$, respectively. We also assumed that $\bm{\varepsilon}_t \sim \mathcal{N}(\bm{0}, \bm{\Sigma})$, where $$\bm{\Sigma} = \begin{bmatrix} \bm{\Sigma}_1 & \bm{0} \\
	\bm{0} & \bm{\Sigma}_1\end{bmatrix}, \quad \text{and} \quad \bm{\Sigma}_1 = \begin{bmatrix}
	2 & \sqrt{6}\rho\\
	\sqrt{6}\rho & 3
\end{bmatrix},$$
and $\rho \in {0, \pm 0.1, \pm 0.2, \pm 0.3, \dots, \pm 0.8}$.

\subsection*{Setup 2: Exploring the effect of correlation on a larger structure}

We consider a slightly larger hierarchy. The structure consists of two-levels and 43 series in total. There are 36 series at the bottom level and are aggregated in groups of size six to form six series at level 1, which are then aggregated to form the total series. We assume a VAR(1) process to generate the observations at the bottom level. The coefficient matrix used for the VAR(1) process is identical to the simulations carried out by \citet{wic2021a}.

Two representations for the correlation matrix of the Gaussian innovation process are considered: (a) all the correlations are non-negative; (b) allows a mixture of positive and negative correlations. A compound symmetric correlation matrix is used for each block of size six at the bottom level. The correlation coefficient for each block is chosen from a uniform distribution on the interval (0.2, 0.7), and correlations between blocks are allowed using the algorithms developed by \citet{haretal13}. The covariance matrix is constructed by sampling the standard deviations from a uniform distribution on the interval $(\sqrt{2}, \sqrt{6})$. Some of these covariances are turned into negatives to allow for a mixture of positive and negative correlations.

For each setup, we generated $T = 101$ or 501 observations for the bottom level series, with the last observation being withheld as the test set. Using the remaining observations as the training set, base forecasts are then generated from the best fitted ARMA (autoregressive moving average) models obtained by minimizing the AICc (corrected Akaike information criterion). We used the default settings in the automated algorithm of \citet{hynkha08} which is implemented in the \texttt{forecast} package for R \citep{hynetalp2020}. The base forecasts are then reconciled using the projection matrices given in Table~\ref{tbl:g-matrices}. 

For each reconciliation method, we use two different covariance estimators: sample and shrinkage. In light of Theorem~\ref{thm:minlogs} we use both univariate and multivariate scoring rules discussed in Section~\ref{sec:scoring-rules} for evaluations. We use $N = 10000$ random draws from the predictive distributions to compute energy and variogram scores. We repeat each simulation setup 1000 times. In the following sections, the percentage relative improvements in scoring rules for a particular method relative to that for the bottom-up method which uses the sample covariance matrix, are reported. A negative (positive) value indicates that the method performs superior (inferior) to the bottom-up method.

We have also considered $T = 101, 301$, and real-roots for the matrices $\bm{A}_1$ and $\bm{A}_2$ for the first simulation setup. However, to save space, we do not present all the results in this paper. The omitted results follow a similar pattern and are available upon request.

\subsection{Exploring the effect of correlation}
\label{sec:sim-small}

\begin{figure}[!htp]
	\includegraphics[width=\textwidth]{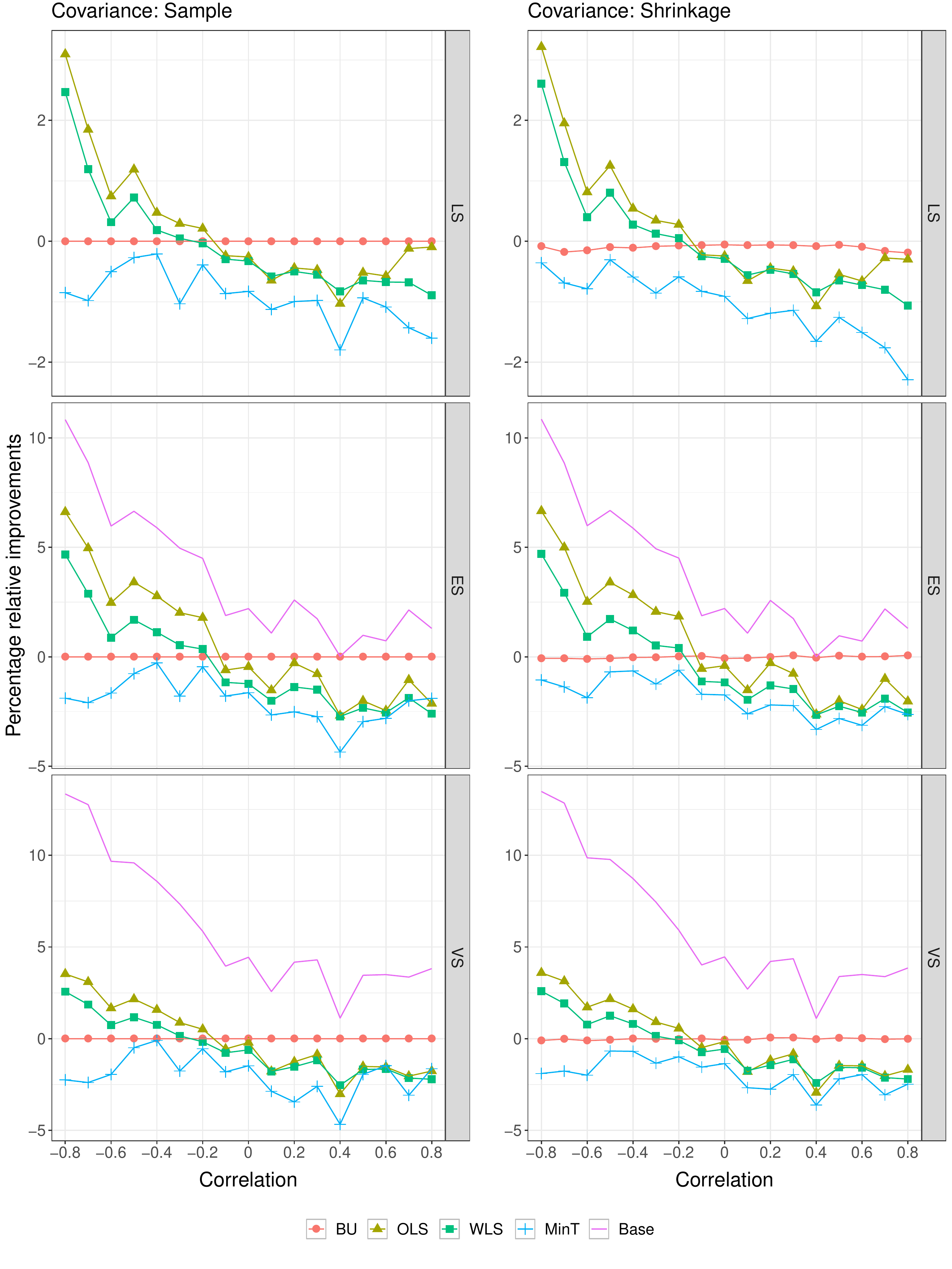}
	\caption{Percentage relative improvements in the logarithmic score (LS), energy score (ES) and variogram score (VS) for the sample (shown in the left panel) and shrinkage (shown in the right panel) covariance estimators. The sample size $T = 101$.}
	\label{fig:multiscore-1}
\end{figure}

\begin{figure}[!htp]
	\includegraphics[width=\textwidth]{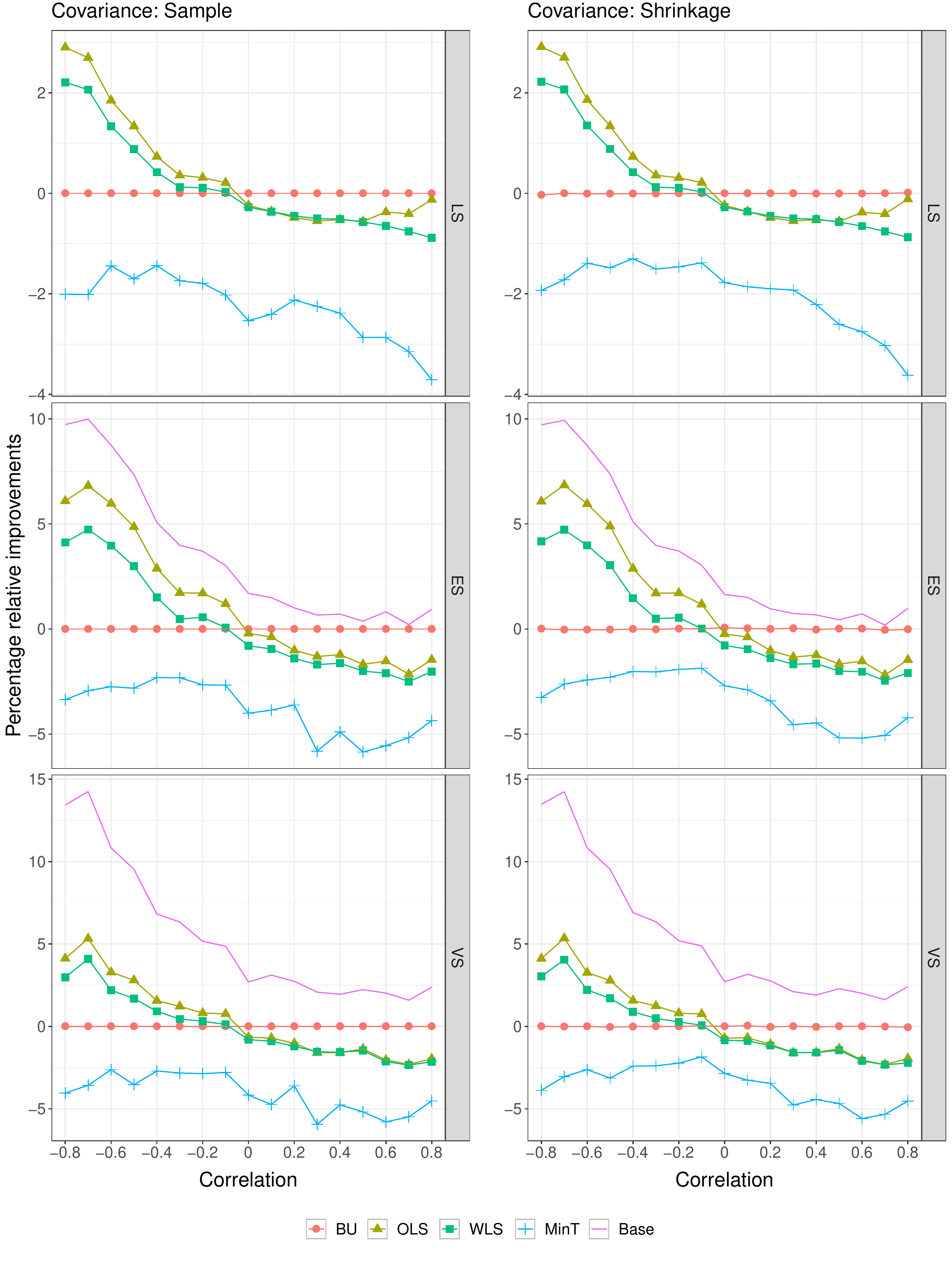}
	\caption{Percentage relative improvements in the logarithmic score (LS), energy score (ES) and variogram score (VS) for the sample (shown in the left panel) and shrinkage (shown in the right panel) covariance estimators. The sample size $T = 501$.}
	\label{fig:multiscore-2}
\end{figure}

\begin{figure}[!htp]
	\includegraphics[width=\textwidth]{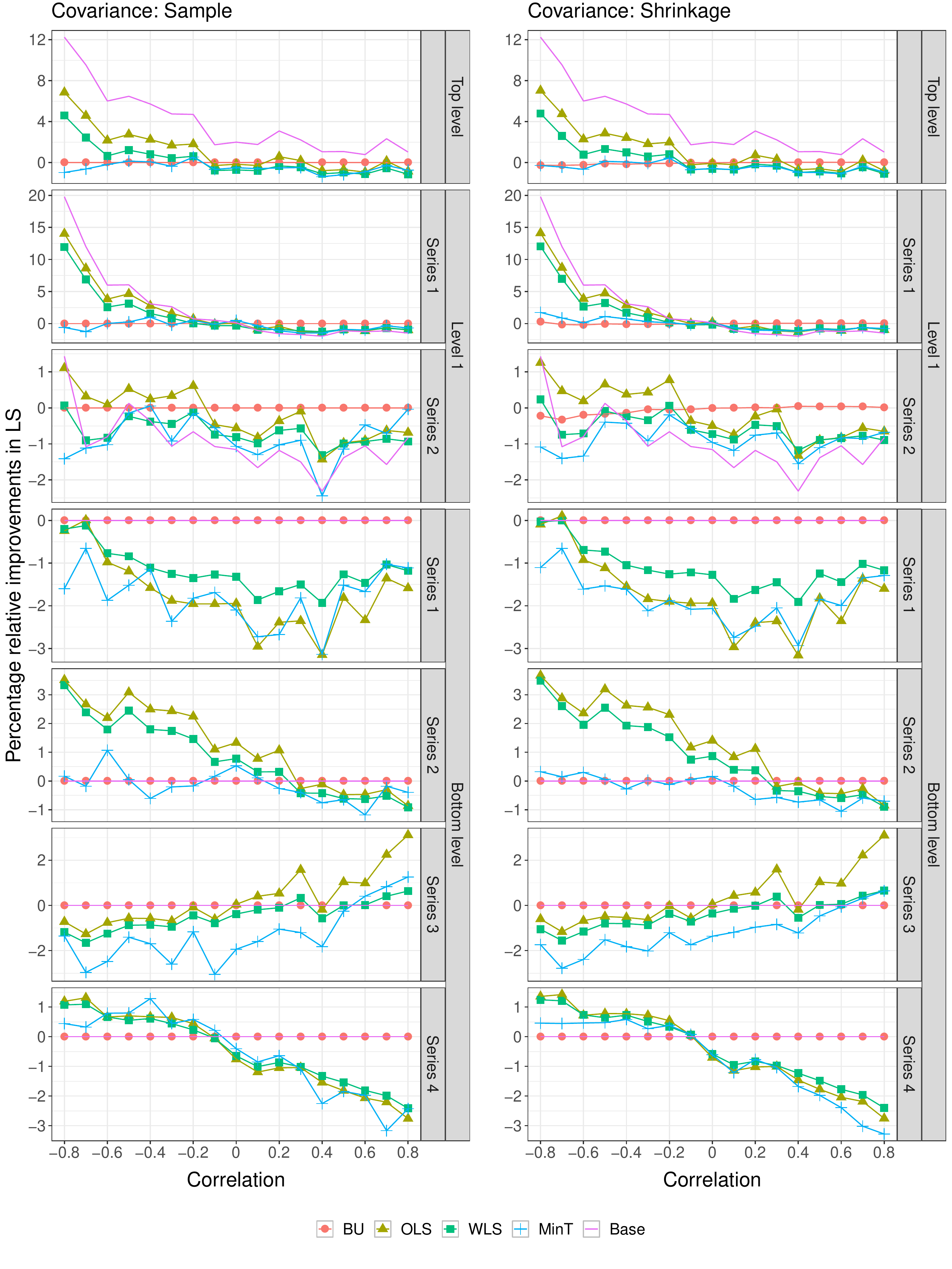}
	\caption{Percentage relative improvements in the logarithmic score (LS) for the sample (shown in the left panel) and shrinkage (shown in the right panel) covariance estimators for each series in the structure. The sample size $T = 101$.}
	\label{fig:uniscore-1}
\end{figure}

\begin{figure}[!htp]
	\includegraphics[width=\textwidth]{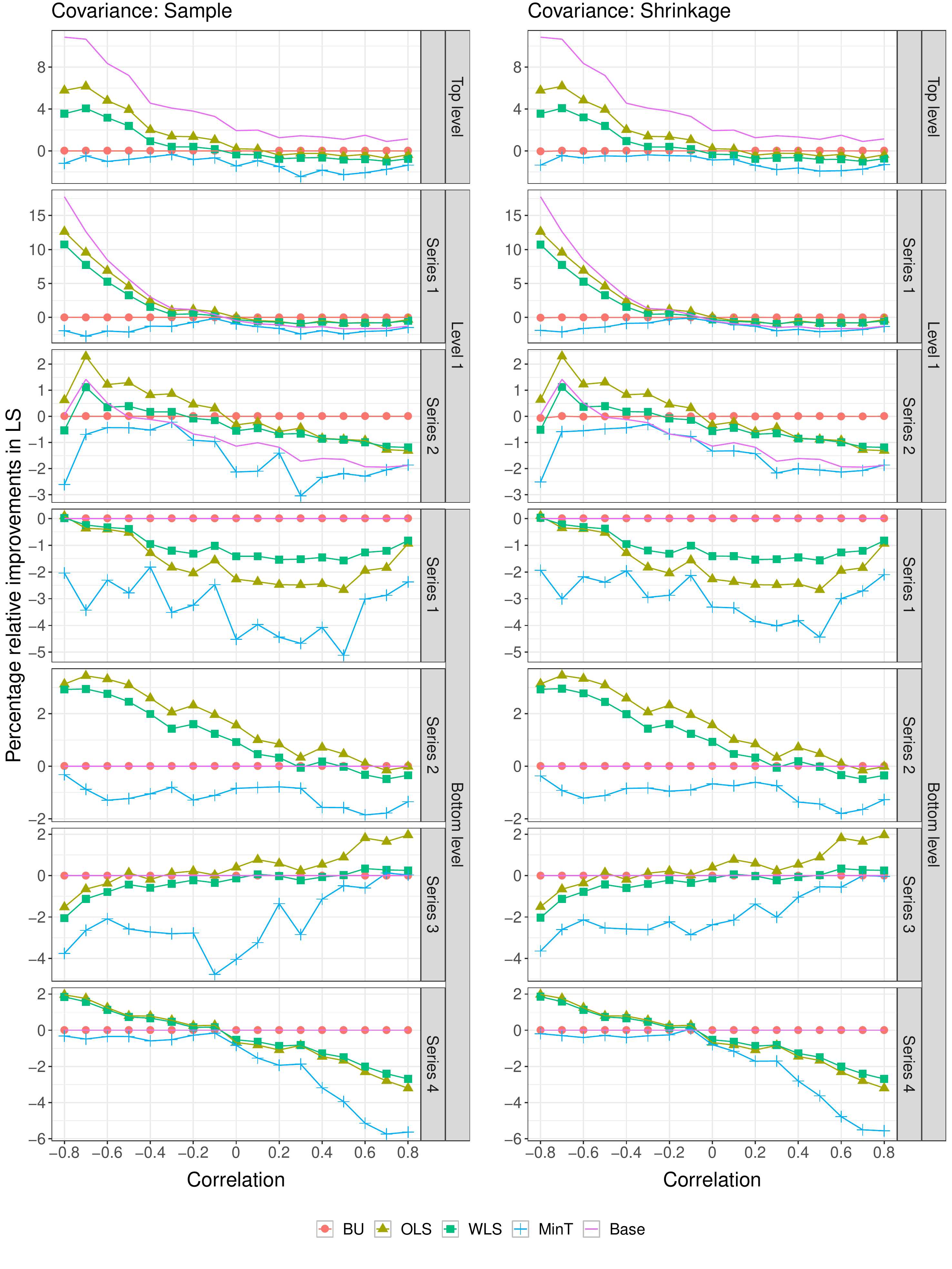}
	\caption{Percentage relative improvements in the logarithmic score (LS) for the sample (shown in the left panel) and shrinkage (shown in the right panel) covariance estimators for each series in the structure. The sample size $T = 501$.}
	\label{fig:uniscore-2}
\end{figure}

Figures~\ref{fig:multiscore-1} and \ref{fig:multiscore-2} show the predictive performances of the base, bottom-up, OLS, WLS and MinT reconciliation methods evaluated using the logarithmic, energy and variogram scores when the contemporaneous error correlation between the bottom-level series varies for $T = 101$ and $T = 501$, respectively. The left panel shows the results when using the sample covariance matrix, whereas the right panel shows that of when using the shrinkage covariance estimator. For the logarithmic score, we do not report the percentage relative improvements for base forecasts as it is an improper score with respect to the incoherent measures if the true data generating process is coherent. In addition, we compute the logarithmic score only based on the bottom level series as they differ from the full structure only by a constant.

It can be observed that for $T = 101$, MinT shows the best performance irrespective of the covariance matrix and the scoring rule used. This pattern has become more prominent when $T = 501$. The superior performance of MinT over OLS under the logarithmic score is also evident from Theorem~\ref{thm:minlogs}.

Figures~\ref{fig:uniscore-1} and \ref{fig:uniscore-2} show the percentage relative improvements of different forecast reconciliation methods when evaluated using the logarithmic score for each series in the structure when $T = 101$ and $T = 501$, respectively. We also repeated the analysis using other univariate scoring rules such as CRPS, and 80\% and 95\% IS. The conclusion from these scoring rules are qualitatively similar hence we report them in Appendix~\ref{apx:uniscore}.

For $T = 101$, no reconciliation method consistently outperforms all the correlation coefficients and the two choices of covariance estimators considered in this study. However, overall, MinT performs better than other methods. MinT which uses the shrinkage estimator seems slightly better than MinT which uses the sample covariance matrix. Because for instances where MinT which uses the sample covariance is worse than OLS, MinT with shrinkage estimator performs better than OLS. As expected from Theorem~\ref{thm:minlogs}, MinT dominates other methods when the sample size $T = 501$.

\subsection{Exploring the effect of correlation on a larger structure}

Table~\ref{tbl:multiscores} summarizes the predictive performance of reconciliation methods using three multivariate scoring rules under two different correlation scenarios: non-negative error correlations (shown in the left panel) and a mixture of positive and negative error correlations (shown in the right panel), and two covariance estimators: sample and shrinkage. The bold entries identify the best performing methods. MinT which uses the shrinkage covariance estimator outperforms irrespective of the scoring rule, correlation structure and sample size. For the logarithmic score, MinT which uses the sample covariance matrix does not perform well for $T = 101$. This may be due to the fact that the sample covariance matrix is a poor estimate of the truth for high dimensional data and have disastrous effects in the calculations of the logarithmic score. As the sample size increases, it performs similarly to MinT which uses the shrinkage estimator.

\begin{table}[ht]
	\caption{\label{tbl:multiscores} Percentage relative improvements in the logarithmic score (LS), energy score (ES) and variogram score (VS) of forecast reconciliation methods for a larger structure.}
	\fontsize{9}{12}\rm\tabcolsep=0.07cm
	\centering
	\begin{tabular}{lrrrrrrrrrrrrrrr}
		\toprule
		& \multicolumn{7}{c}{Non-negative error correlations} & & \multicolumn{7}{c}{Positive and negative error correlations}\\
		\cmidrule{2-8} \cmidrule{10-16}
		& \multicolumn{3}{c}{Sample} & & \multicolumn{3}{c}{Shrinkage} & & \multicolumn{3}{c}{Sample} & & \multicolumn{3}{c}{Shrinkage} \\
		\cmidrule{2-4} \cmidrule{6-8} \cmidrule{10-12} \cmidrule{14-16}
		& LS & ES & VS & & LS & ES & VS & & LS & ES & VS & & LS & ES & VS\\
		\cmidrule{2-4} \cmidrule{6-8} \cmidrule{10-12} \cmidrule{14-16}
		& \multicolumn{7}{c}{$T = 101$} & & \multicolumn{7}{c}{$T = 101$}\\
		\cmidrule{2-8} \cmidrule{10-16}
		BU & 0.0 & 0.0 & 0.0 & & $-5.2$ & 0.2 & 0.0 & & 0.0 & 0.0 & 0.0 & &  $-5.0$ & 0.2 & 0.0 \\
		OLS & $-0.4$ & $-0.9$ & $-3.0$ & & $-5.7$ & $-0.9$ & $-3.0$ & & $\pmb{-0.5}$ & $-2.1$ & $-3.8$ & & $-5.6$ & $-2.1$ & $-3.8$ \\
		WLS & $\pmb{-0.5}$ & $-5.6$ & $-4.7$ & & $-5.4$ & $-5.3$ & $-4.4$ & & $-0.4$ & $-3.8$ & $-4.2$ & & $-5.3$ & $-3.5$ & $-3.9$ \\
		MinT & 2.4 & $\pmb{-6.3}$ & $\pmb{-6.2}$ & & $\pmb{-6.3}$ & $\pmb{-7.5}$ & $\pmb{-6.9}$ & & 2.2 & $\pmb{-4.2}$ & $\pmb{-6.3}$ & & $\pmb{-6.6}$ & $\pmb{-5.7}$ & $\pmb{-7.7}$ \\
		Base &  & 2.0 & $-1.6$ &  &  & 1.9 & $-1.6$ & &  & 0.1 & $-2.7$ & &  & 0.1 & $-2.7$ \\
		\cmidrule{2-4} \cmidrule{6-8} \cmidrule{10-12} \cmidrule{14-16}
		& \multicolumn{7}{c}{$T = 301$} & & \multicolumn{7}{c}{$T = 301$} \\
		\cmidrule{2-8} \cmidrule{10-16}
		BU & 0.0 & 0.0 & 0.0 & & $-0.4$ & 0.1 & 0.0 & & 0.0 & 0.0 & 0.0 & & $-0.4$ & 0.0 & 0.0 \\
		OLS & $-0.3$ & $-0.7$ & $-3.4$ & & $-0.8$ & $-0.7$ & $-3.4$ & & $-0.4$ & $-1.3$ & $-3.5$ & & $-0.8$ & $-1.2$ & $-3.5$ \\
		WLS & $-0.4$ & $-4.8$ & $-4.2$ & & $-0.8$ & $-4.7$ & $-4.0$ & & $-0.5$ & $-3.8$ & $-4.0$ & & $-0.8$ & $-3.8$ & $-4.1$ \\
		MinT & $\pmb{-2.4}$ & $\pmb{-8.6}$ & $\pmb{-8.8}$ & & $\pmb{-2.6}$ & $\pmb{-8.4}$ & $\pmb{-8.4}$ & & $\pmb{-2.6}$ & $\pmb{-6.4}$ & $\pmb{-9.6}$ & & $\pmb{-3.1}$ & $\pmb{-6.8}$ & $\pmb{-9.8}$ \\
		Base &  & 1.6 & $-2.5$ & &  & 1.6 & $-2.5$ & &  & 0.6 & $-2.5$ & &  & 0.6 & $-2.5$ \\
		\cmidrule{2-4} \cmidrule{6-8} \cmidrule{10-12} \cmidrule{14-16}
		& \multicolumn{7}{c}{$T = 501$} & & \multicolumn{7}{c}{$T = 501$}\\
		\cmidrule{2-8} \cmidrule{10-16}
		BU & 0.0 & 0.0 & 0.0 & & $-0.1$ & 0.0 & 0.0 & & 0.0 & 0.0 & 0.0 & & $-0.1$ & 0.1 & 0.0 \\
		OLS & $-0.3$ & $-0.7$ & $-3.3$ & & $-0.4$ & $-0.7$ & $-3.3$ & & $-0.4$ & $-1.9$ & $-4.0$ & & $-0.6$ & $-1.9$ & $-4.0$ \\
		WLS & $-0.4$ & $-5.0$ & $-4.5$ & & $-0.5$ & $-5.0$ & $-4.4$ & & $-0.5$ & $-3.6$ & $-4.2$ & & $-0.6$ & $-3.6$ & $-4.2$ \\
		MinT & $\pmb{-2.7}$ & $\pmb{-9.7}$ & $\pmb{-10.3}$ & & $\pmb{-2.6}$ & $\pmb{-9.5}$ & $\pmb{-9.9}$ & & $\pmb{-3.1}$ & $\pmb{-6.4}$ & $\pmb{-10.7}$ & & $\pmb{-3.2}$ & $\pmb{-6.6}$ & $\pmb{-10.7}$ \\
		Base &  & 1.5 & $-2.2$ & &  & 1.5 & $-2.2$ & &  & $-0.1$ & $-2.9$ & &  & $-0.1$ & $-2.9$ \\
		\bottomrule
	\end{tabular}
\end{table}

\begin{figure}[!htp]
	\includegraphics[width=\textwidth]{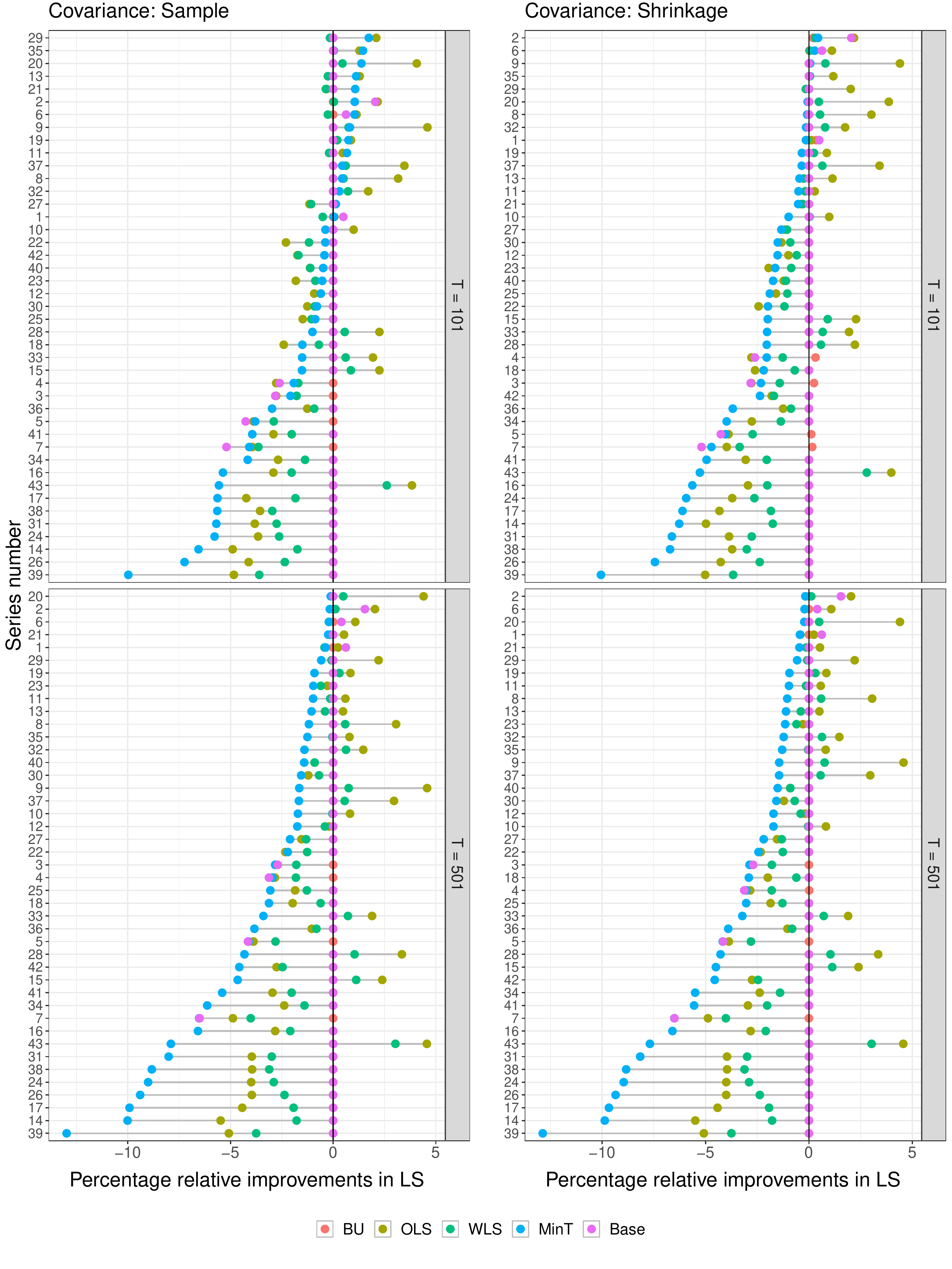}
	\caption{Percentage relative improvements in the logarithmic score (LS) for the sample (shown in the left panel) and shrinkage (shown in the right panel) covariance estimators for each series in the structure. The contemporaneous error correlations are on the interval $(-1, 1)$. The series are sorted according to the performance of MinT.}
	\label{fig:logscore-large}
\end{figure}

Figure~\ref{fig:logscore-large} presents the predictive performance evaluated using the logarithmic score for the marginal predictive densities from various reconciliation methods when positive and negative error correlations are present between the bottom-level series. The results for CRPS and IS are similar and given in Appendix~\ref{apx:uniscore}. We do not report the results for the non-negative error correlation structure as they are qualitatively similar.

As we noted in Section~\ref{sec:sim-small}, for $T = 101$, there is no reconciliation method consistently outperforms the rest for all the series in the structure. For MinT, the shrinkage estimator performs better than the sample covariance matrix. As the sample size increases, the performance of MinT dominates for most of the series. In comparison to OLS, MinT is superior for almost all the series as we would have expected.

\section{Application}
\label{sec:application}
We consider the Australian domestic tourism demand data set to build a hierarchical structure. We measure domestic tourism demand using ``visitor nights'', the total number of nights spent by Australians away from home. The data are managed by Tourism Research Australia and are collected through the national visitor survey conducted by computer-assisted telephone interviews. The information is gathered from an annual sample of 120,000 Australian residents aged 15 years or over. The data are monthly time series and span the period from January 1998 to December 2019.

A two-level structure is considered where the total number of visitor nights in Australia is disaggregated based on geography. The first level of disaggregation is by seven states and the second level of disaggregation is by 77 regions. Table~\ref{tbl:tourism-regions} given in Appendix~\ref{apx:data} provides more details about the structure.

We begin with a training size of 120 observations. Using this training data, the best fitted ARIMA and ETS models are obtained for each series in the structure by minimizing AICc and then 1-step-ahead base forecasts are computed. Assuming that the base predictive density is Gaussian, the reconciled probabilistic forecasts are obtained using the $\bG$-matrices given in Table~\ref{tbl:g-matrices}. We repeat this procedure by rolling the training window by one observation at a time until the end of the sample. We evaluate the accuracy of both the point and probabilistic forecasts using MSE and scoring rules, respectively. For probabilistic forecasts, we use multivariate scoring rules such as logarithmic score, energy score and variogram score, and univariate scoring rules such as logarithmic score, continuous ranked probability score and interval score.

Table~\ref{tbl:prial-mse-aus-2-level} summaries the accuracy of point forecasts for ARIMA and ETS models. We report the percentage relative improvements in MSE of a particular method relative to the bottom-up method. A negative (positive) entry indicates a decrease (increase) in MSE relative to that for the bottom-up forecasts. The bold entries identify the best performing methods. It can be seen that all forecast reconciliation methods outperform the bottom-up method and can be arranged in descending order of performance as OLS, MinT and WLS. Theoretically, we would expect MinT to perform better than OLS, on average. The rankings of these two methods might have changed because the estimation of the covariance matrix is challenging for high dimensional data.

\begin{table}[ht]
	\centering
	\caption{Out-of-sample forecast evaluation using MSE for the two-level Australian domestic tourism hierarchy.}
	\label{tbl:prial-mse-aus-2-level}
	\fontsize{9}{12}
	\begin{tabular}{lrrrrrrrrr}
		\toprule
		& \multicolumn{4}{c}{ARIMA} & & \multicolumn{4}{c}{ETS}\\
		\cmidrule{2-5} \cmidrule{7-10}
		& Total & States & Regions & Average & & Total & States & Regions & Average \\
		\cmidrule{2-5} \cmidrule{7-10}
		OLS & $\pmb{-30.8}$ & $\pmb{-18.2}$ & $-2.7$ & $\pmb{-24.1}$ & & $-28.5$ & $\pmb{-12.8}$ & $-2.1$ & $\pmb{-21.3}$ \\
		WLS & $-20.2$ & $-13.5$ & $-3.5$ & $-16.4$ & & $-15.8$ & $-9.5$ & $-2.2$ & $-12.5$ \\
		MinT(Sample) & $-22.8$ & $-12.7$ & $2.2$ & $-17.1$ & & $-25.7$ & $-9.7$ & $4.0$ & $-18.0$ \\
		MinT(Shrink) & $-23.2$ & $-15.1$ & $\pmb{-4.0}$ & $-18.8$ & & $-18.0$ & $-10.8$ & $\pmb{-2.5}$ & $-14.3$ \\
		Base & $-29.9$ & $-16.9$ & $0.0$ & $-22.9$ & & $\pmb{-28.7}$ & $-11.0$ & $0.0$ & $-20.7$ \\
		\bottomrule
	\end{tabular}
\end{table}

Table~\ref{tbl:prial-score-aus-2-level} shows the accuracy of probabilistic forecasts for ARIMA and ETS models using the multivariate scoring rules. The results are given separately for the two choices of the covariance estimators. The figures represent the percentage relative improvements in different scoring rules relative to the bottom-up method which uses the sample covariance matrix. A negative (positive) entry indicates a decrease (increase) in predictive accuracy relative to the predictive distribution from the bottom-up method. The bold entries identify the best performing methods. We should emphasize here that the logarithmic score is computed based only on the joint predictive distribution of the bottom level series. We do not present the results of the logarithmic score for the base predictive density as the score is improper for incoherent densities when the true data generating process is coherent.

The percentage relative improvements of reconciliation methods which use the sample covariance matrix vary in a large range when evaluated using the logarithmic score. Among them, BU is the best and MinT is the worst. On the other hand, no such prominent behavior is observed when the shrinkage covariance estimator is used. This may be due to the fact that the sample covariance matrix provides a poor estimate for high dimensional data and has some adverse effects on the computation of the logarithmic score. This can also be seen from Table~\ref{tbl:prial-mse-aus-2-level}, where MinT(Sample) is worst than MinT(Shrink) for the bottom-level series (i.e. regions). It is surprising to observe that the logarithmic score could not differentiate BU from other reconciliation approaches even when the shrinkage covariance estimator is used. Because we noted in the previous analysis that BU is the worst performing method for point forecast reconciliation. As for the energy score, OLS is the best, and mostly MinT is the second best reconciliation method, and BU is the worst performing method regardless of the covariance matrix used. This is the same ordering that we noted in the point forecast reconciliation. For the variogram score and sample covariance combination, OLS or WLS is the best performing reconciliation method while MinT is the worst, whereas the variogram score and shrinkage covariance combination is considered MinT is the best and OLS or WLS is the second best.

\begin{table}[ht]
	\caption{\label{tbl:prial-score-aus-2-level} Percentage relative improvements in the logarithmic score (LS), energy score (ES) and variogram score (VS) for probabilistic forecast reconciliation methods for the two-level Australian domestic tourism hierarchy.}
	\centering
	\fontsize{9}{12} \rm\tabcolsep=0.12cm
	\begin{tabular}{lrrrrrrrrrrrrrrr}
		\toprule
		& \multicolumn{7}{c}{ARIMA} & & \multicolumn{7}{c}{ETS}\\
		\cmidrule{2-8} \cmidrule{10-16}
		& \multicolumn{3}{c}{Sample} & & \multicolumn{3}{c}{Shrinkage} & & \multicolumn{3}{c}{Sample} & & \multicolumn{3}{c}{Shrinkage} \\
		\cmidrule{2-4} \cmidrule{6-8} \cmidrule{10-12} \cmidrule{14-16}
		& LS & ES & VS & & LS & ES & VS & & LS & ES & VS & & LS & ES & VS \\
		\cmidrule{2-4} \cmidrule{6-8} \cmidrule{10-12} \cmidrule{14-16}
		BU & \pmb{0.0} & 0.0 & 0.0 & & $\pmb{-15.2}$ & 1.7 & $-0.1$ & & 0.0 & 0.0 & 0.0 & & $\pmb{-17.9}$ & 1.9 & $-0.1$ \\
		OLS & 0.8 & $\pmb{-11.5}$ & $\pmb{-4.1}$ & & $-14.9$ & $\pmb{-11.2}$ & $-3.9$ & & 0.3 & $\pmb{-10.6}$ & $-2.8$ & & $-17.7$ & $\pmb{-10.3}$ & $-2.8$ \\
		WLS & 0.2 & $-7.7$ & $-3.9$ & & $-15.0$ & $-5.7$ & $-4.0$ & & $\pmb{-0.1}$ & $-6.2$ & $-2.6$ & & $-17.7$ & $-4.0$ & $-2.7$ \\
		MinT & 8.5 & $-7.6$ & 0.0 & & $-15.1$ & $-7.0$ & $\pmb{-4.8}$ & & 8.9 & $-8.3$ & 1.6 & & $-17.8$ & $-4.9$ & $\pmb{-3.1}$ \\
		Base &  & $-10.5$ & $-3.6$ & &  & $-10.3$ & $-3.6$ & &  & $-10.1$ & $\pmb{-3.0}$ & &  & $-9.8$ & $\pmb{-3.1}$ \\
		\bottomrule
	\end{tabular}
\end{table}

Figure~\ref{fig:uniscore-application} presents the evaluation of predictive accuracy using the univariate scoring rules for ARIMA models. We use the shrinkage covariance estimator as it tends to show better performances than the sample covariance matrix. The results for ETS models are qualitatively similar, and we present them in Appendix~\ref{apx:uniscore}. For all the scoring rules, OLS tends to perform particularly well for almost all of the series at the top and level 1 while performing poorly for a few series at the bottom level. WLS and MinT show gains only for few series at the top and level 1. Unlike OLS, they do not exhibit substantial losses for the bottom level series.

\begin{figure}[!htp]
	\includegraphics[width=.98\textwidth]{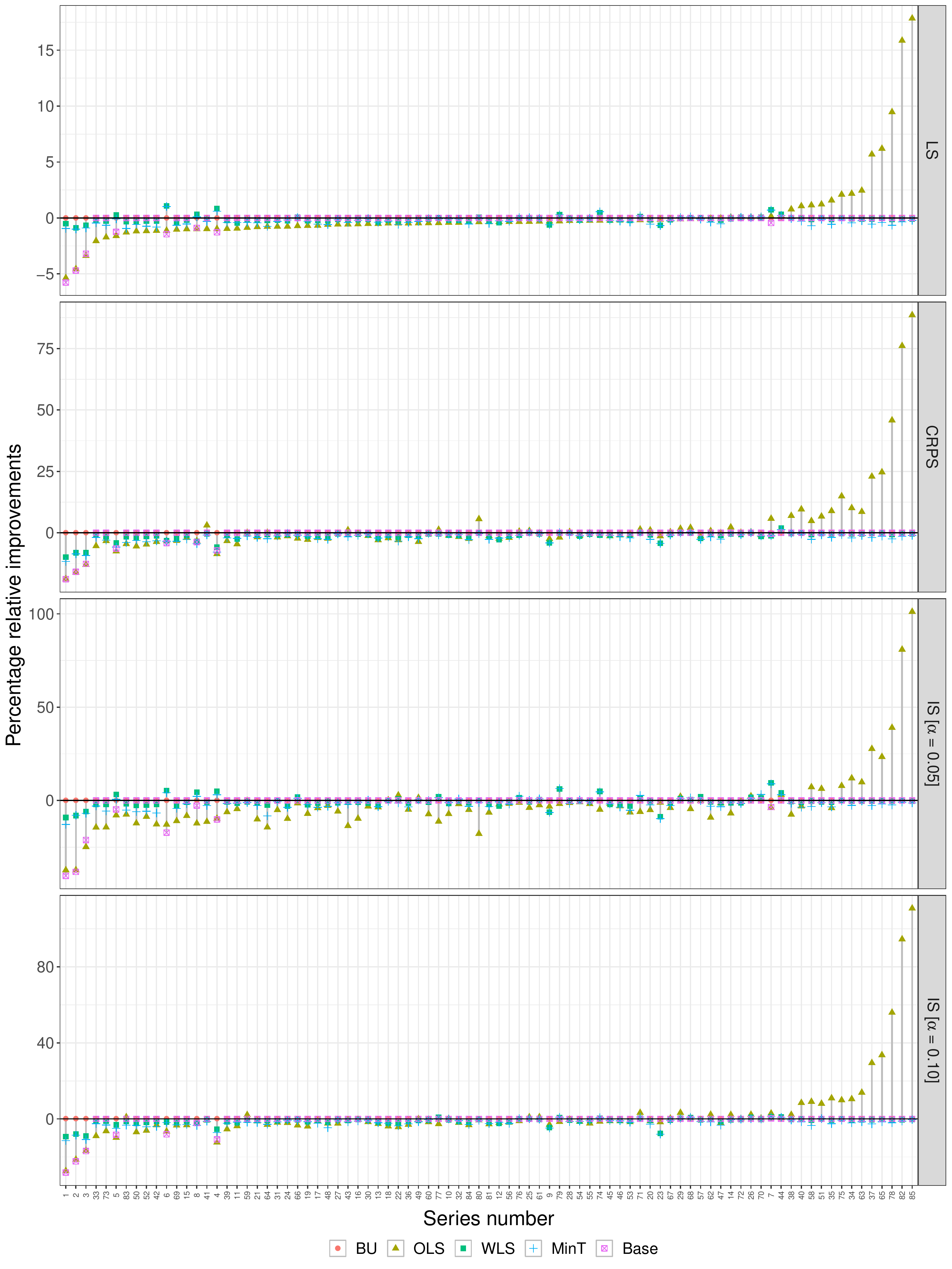}
	\caption{Percentage relative improvements in predictive accuracy evaluated using univariate scoring rules for each series in the structure. The univariate scoring rules used are logarithmic score (LS), continuous ranked probability score (CRPS) and interval score (IS). The base predictive distributions are obtained from ARIMA models. The series are sorted according to the performance of the OLS method evaluated using the logarithmic score.}
	\label{fig:uniscore-application}
\end{figure}

\section{Conclusion}
\label{sec:conclusion}
In this paper, we aimed to study the properties of probabilistic forecast reconciliation methods as it has attracted a lot of attention in recent years. We theoretically showed that if the base predictive distribution is jointly Gaussian, then among all the projection matrices, MinT minimizes the logarithmic score of reconciled predictive distribution. In addition, the logarithmic score for each marginal Gaussian predictive density after applying MinT is smaller than that of OLS. The simulations also revealed that these relationships hold as the sample size increases. The performance can be impacted by small samples as obtaining a precise estimate of the covariance matrix is challenging for high dimensional data. In our real data application, the logarithmic score was greatly impacted by the covariance matrix used, whereas the energy and variogram score yielded comparable results.

\section*{Acknowledgement}
The author greatly appreciates valuable comments and insights from Professor Rob J Hyndman, Professor Thomas Lumley, Associate Professor Ilze Ziedins and Dr\@. Ciprian Giurcaneanu. The author wishes to acknowledge the use of the New Zealand eScience Infrastructure (NeSI) high-performance computing facilities as part of this research. New Zealand's national facilities are provided by NeSI and funded jointly by NeSI's collaborator institutions and through the Ministry of Business, Innovation \& Employment's Research Infrastructure programme. URL \url{https://www.nesi.org.nz}.

\newpage

\appendix

\section{Univariate scoring rules}
\label{apx:uniscore}

\begin{figure}[!ht]
	\includegraphics[width=.95\textwidth]{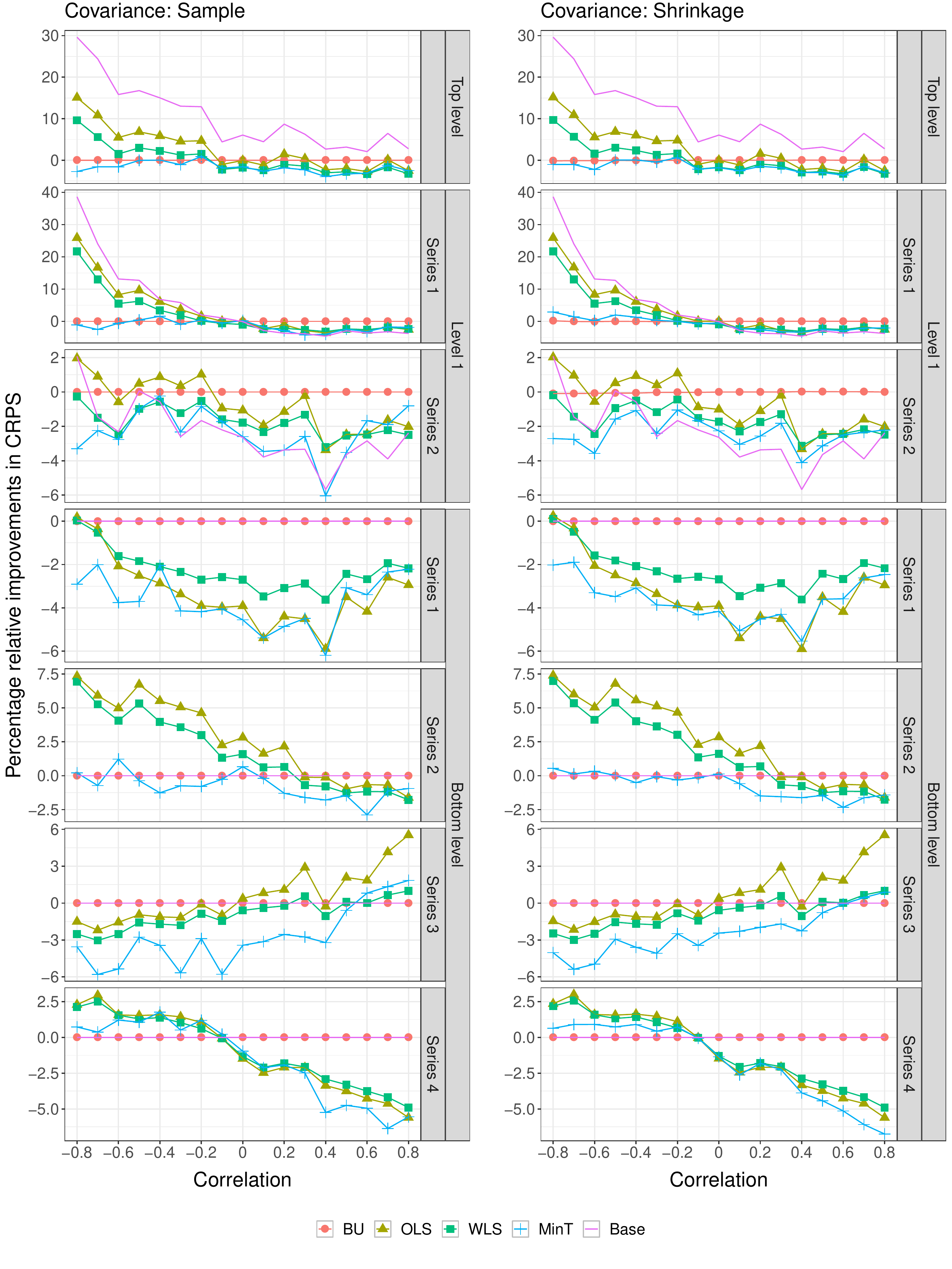}
	\caption{Percentage relative improvements in the continuous ranked probability score (CRPS) for the sample (shown in the left panel) and shrinkage (shown in the right panel) covariance estimators for each series in the structure. The sample size $T = 101$.}
\end{figure}

\begin{figure}[!ht]
	\includegraphics[width=.95\textwidth]{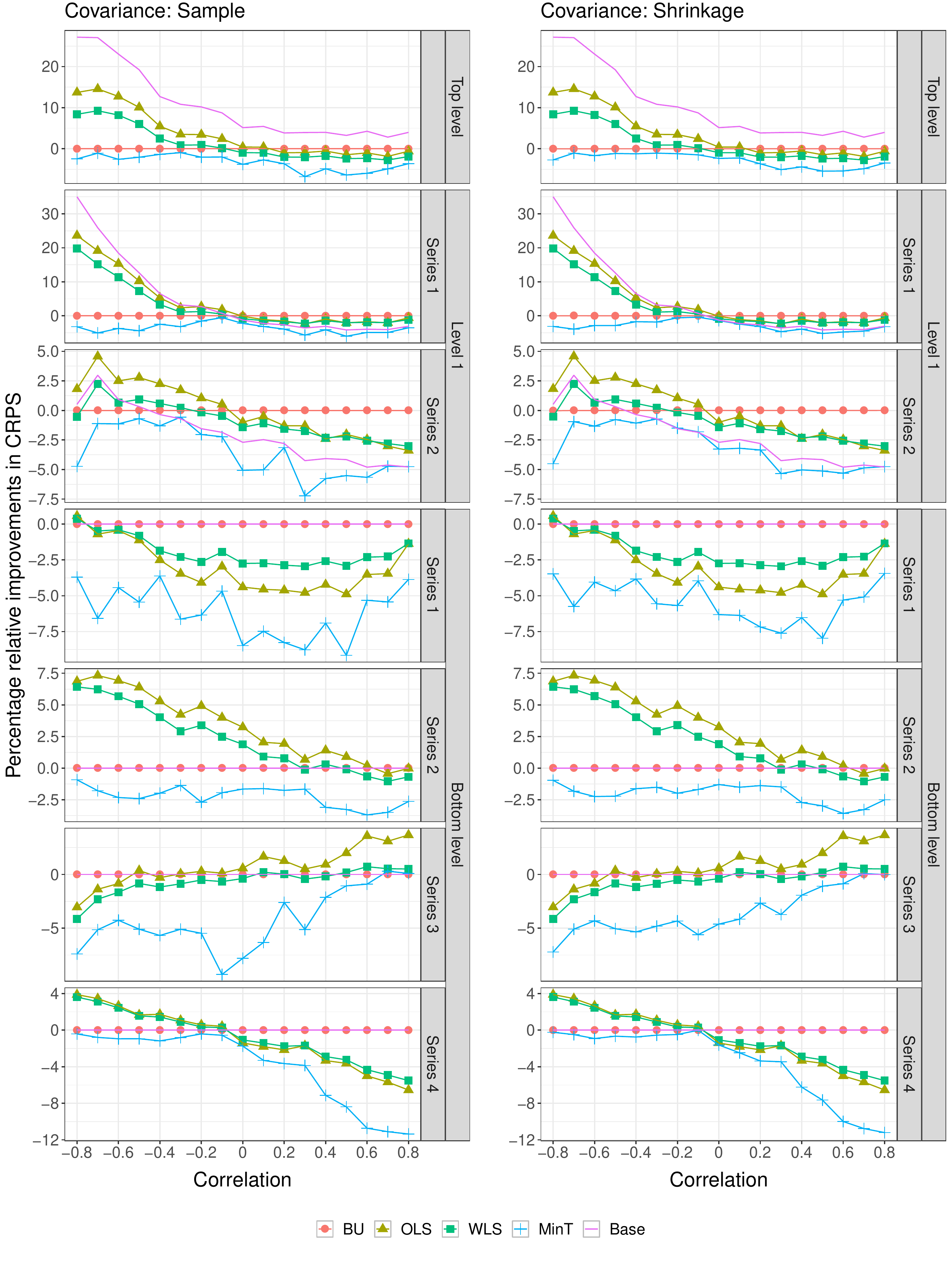}
	\caption{Percentage relative improvements in the continuous ranked probability score (CRPS) for the sample (shown in the left panel) and shrinkage (shown in the right panel) covariance estimators for each series in the structure. The sample size $T = 501$.}
\end{figure}

\begin{figure}[!ht]
	\includegraphics[width=.95\textwidth]{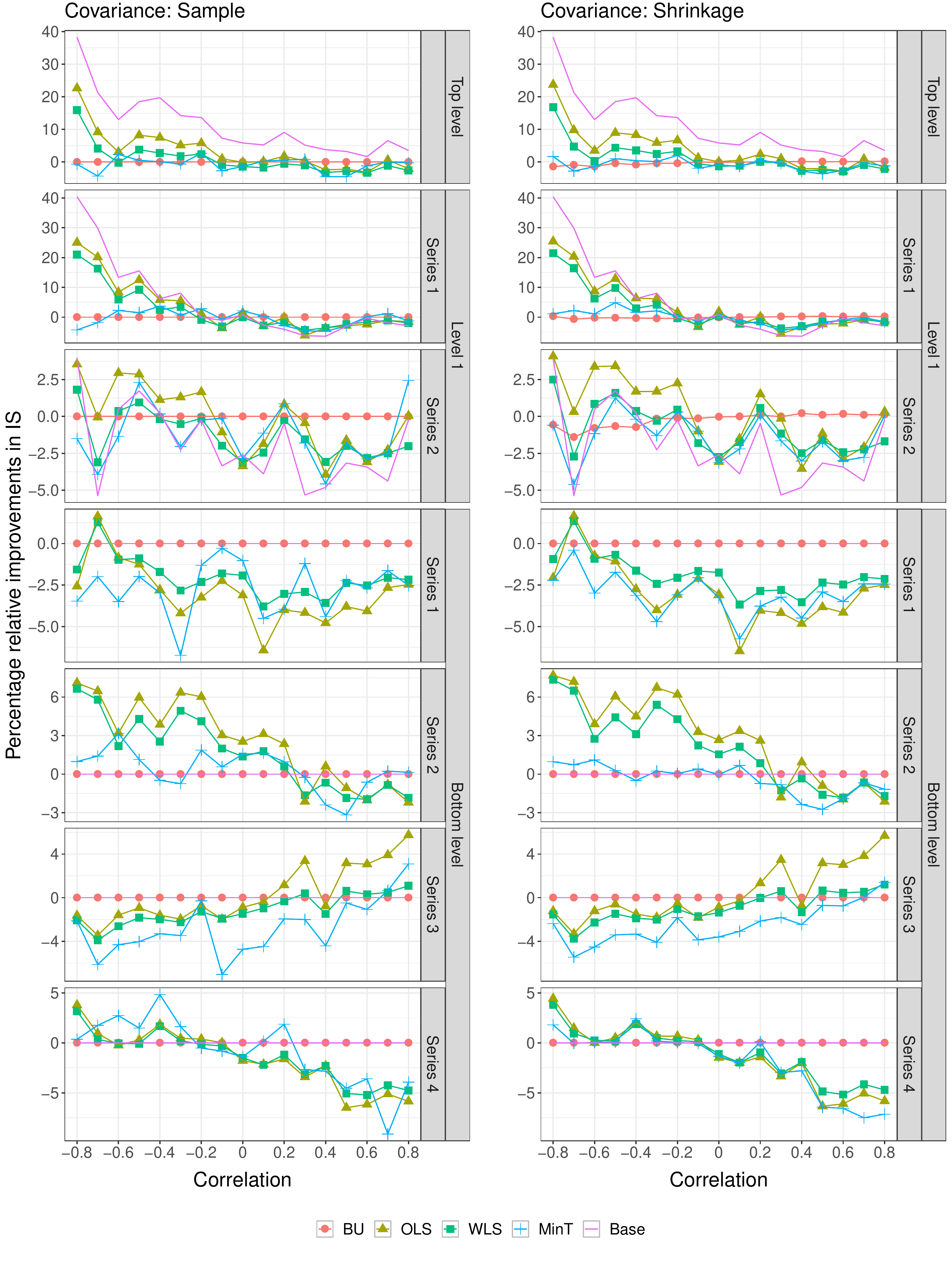}
	\caption{Percentage relative improvements in the interval score (IS) for the sample (shown in the left panel) and shrinkage (shown in the right panel) covariance estimators for each series in the structure. The sample size $T = 101$ and $\alpha = 0.05$.}
\end{figure}

\begin{figure}[!ht]
	\includegraphics[width=.95\textwidth]{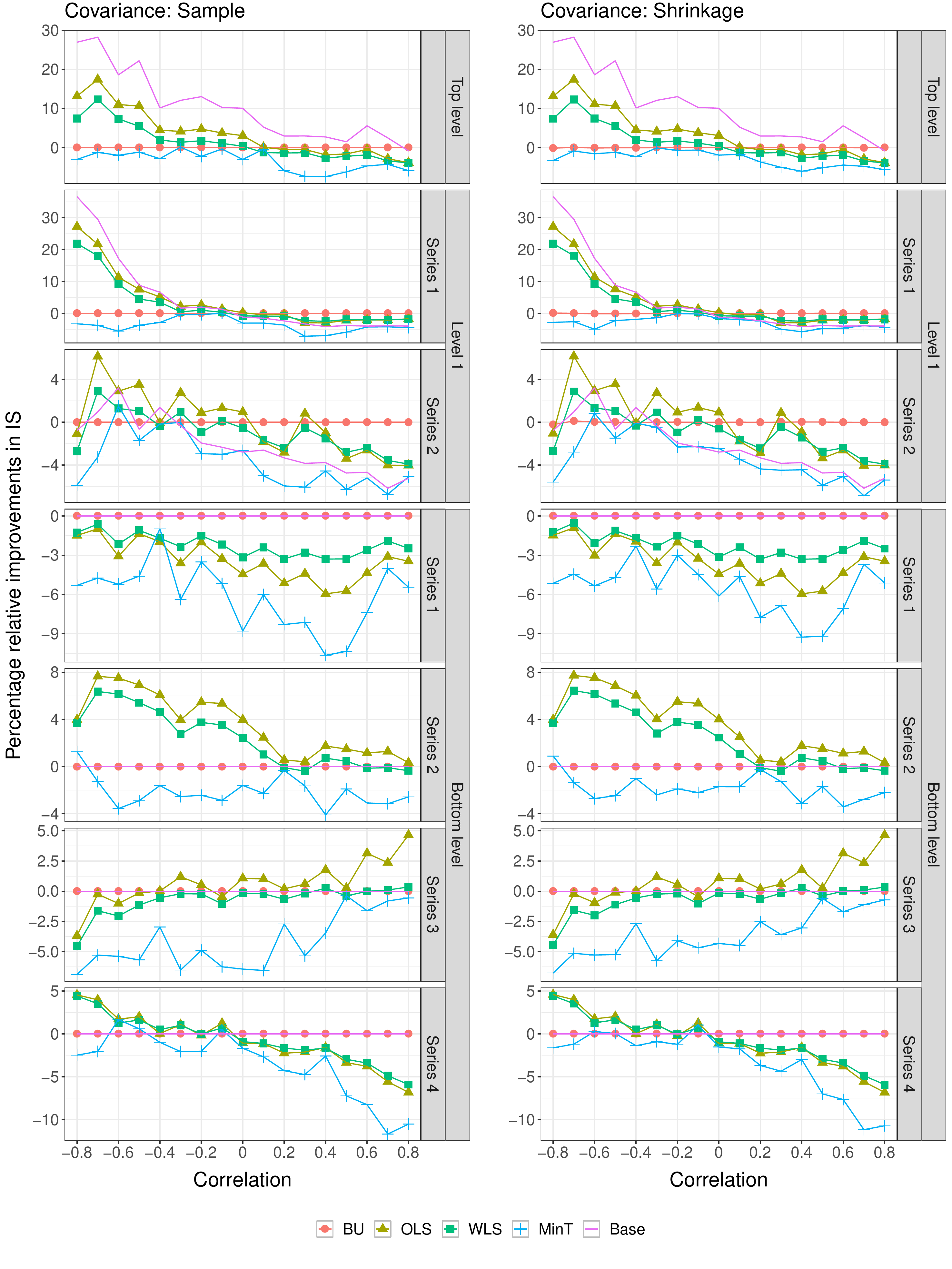}
	\caption{Percentage relative improvements in the interval score (IS) for the sample (shown in the left panel) and shrinkage (shown in the right panel) covariance estimators for each series in the structure. The sample size $T = 501$ and $\alpha = 0.05$.}
\end{figure}

\begin{figure}[!ht]
	\includegraphics[width=.95\textwidth]{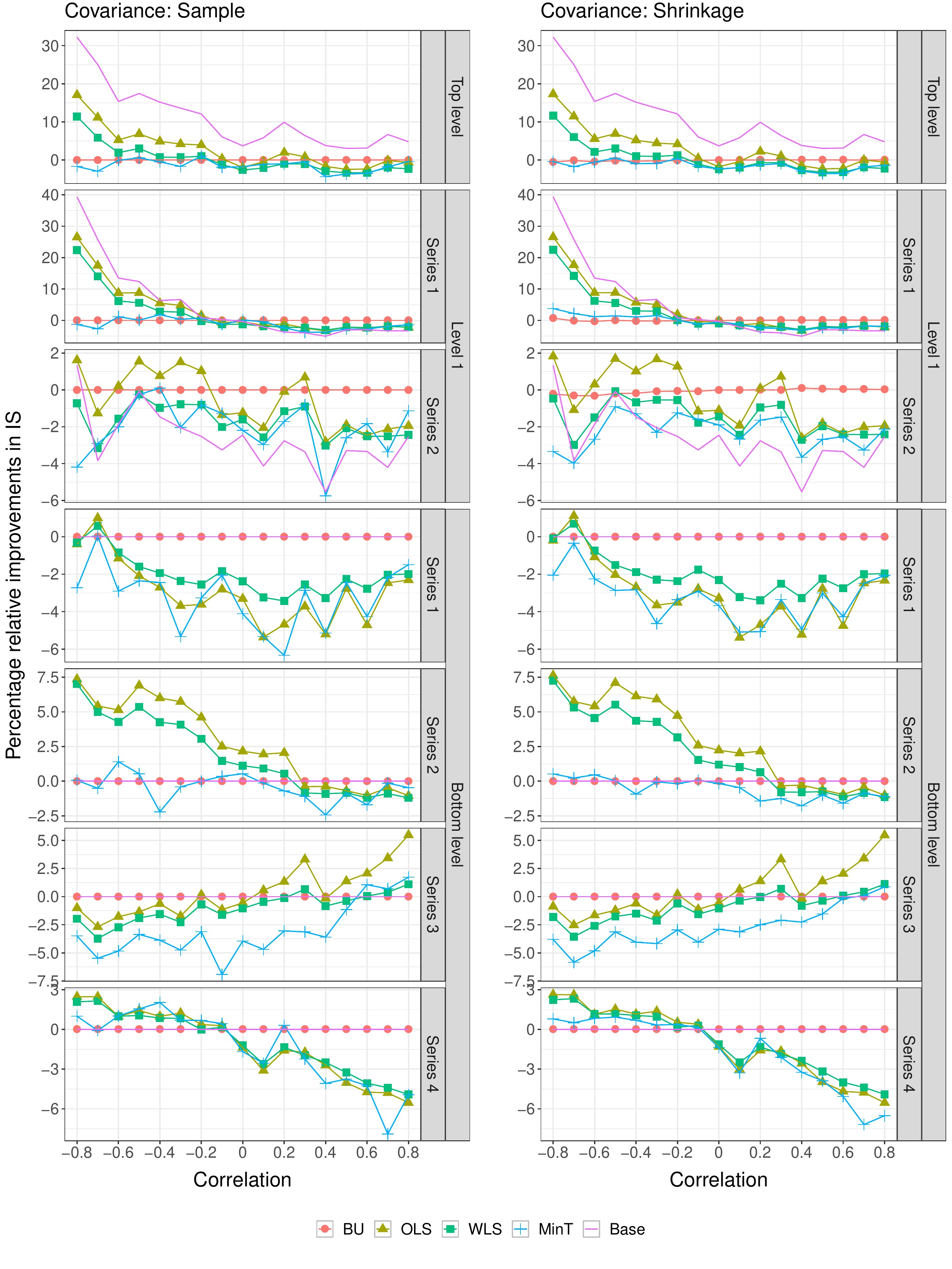}
	\caption{Percentage relative improvements in the interval score (IS) for the sample (shown in the left panel) and shrinkage (shown in the right panel) covariance estimators for each series in the structure. The sample size $T = 101$ and $\alpha = 0.10$.}
\end{figure}

\begin{figure}[!ht]
	\includegraphics[width=.95\textwidth]{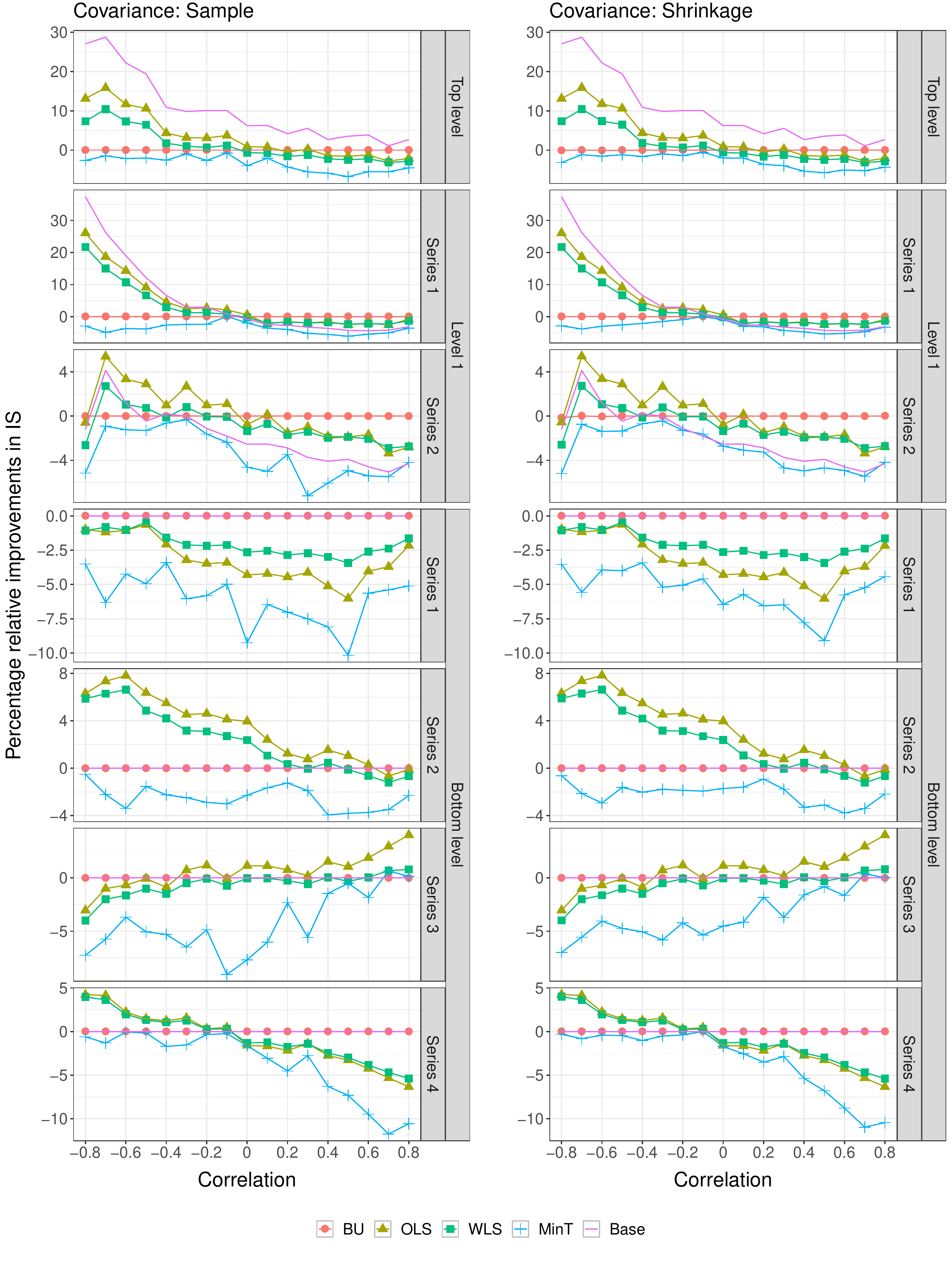}
	\caption{Percentage relative improvements in the interval score (IS) for the sample (shown in the left panel) and shrinkage (shown in the right panel) covariance estimators for each series in the structure. The sample size $T = 501$ and $\alpha = 0.10$.}
\end{figure}

\begin{figure}[!ht]
	\includegraphics[width=.95\textwidth]{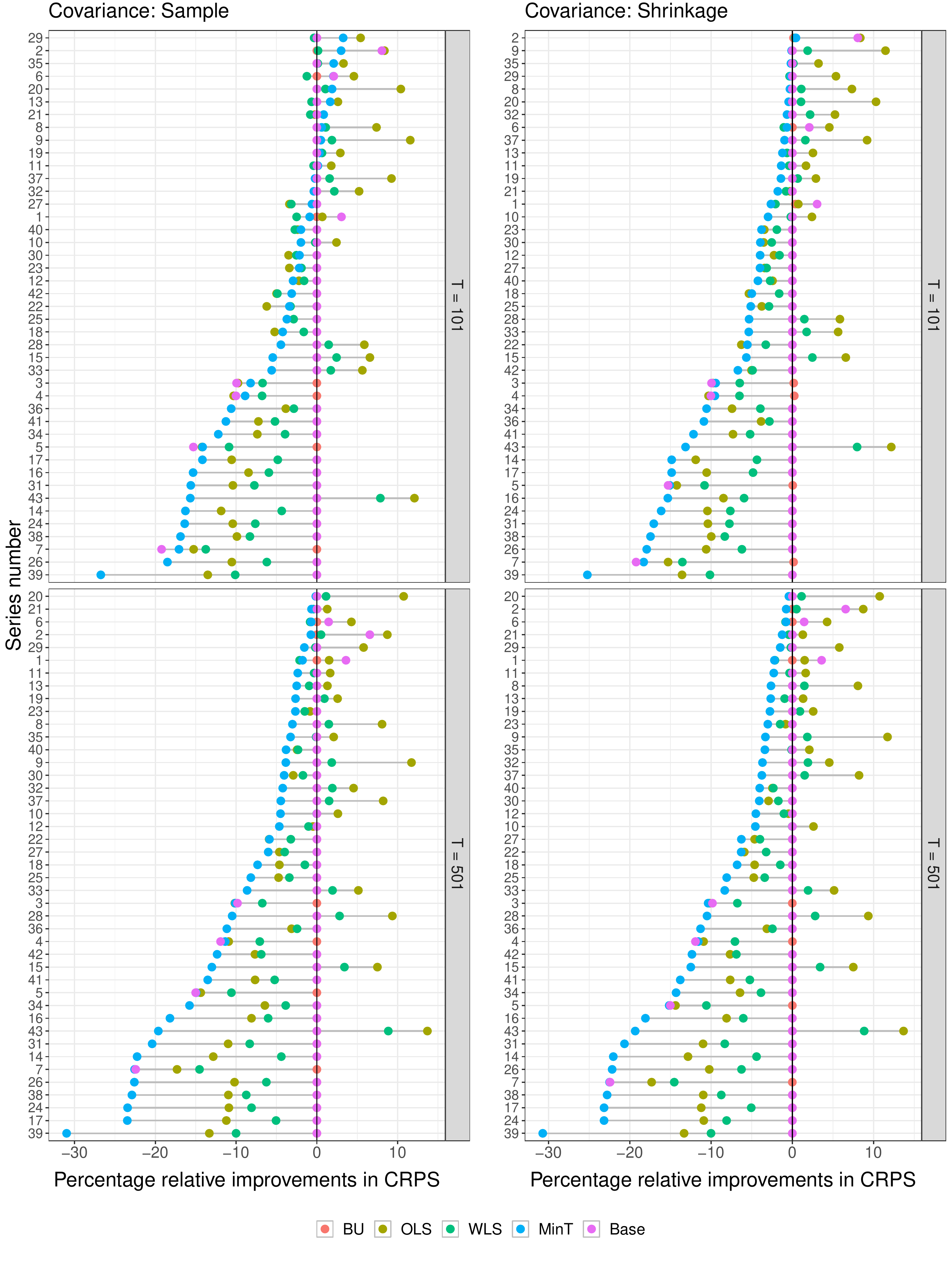}
	\caption{Percentage relative improvements in the continuous ranked probability score (CRPS) for the sample (shown in the left panel) and shrinkage (shown in the right panel) covariance estimators for each series in the structure. The contemporaneous error correlations are on the interval $(-1, 1)$. The series are sorted according to the performance of MinT.}
\end{figure}

\begin{figure}[!ht]
	\includegraphics[width=.95\textwidth]{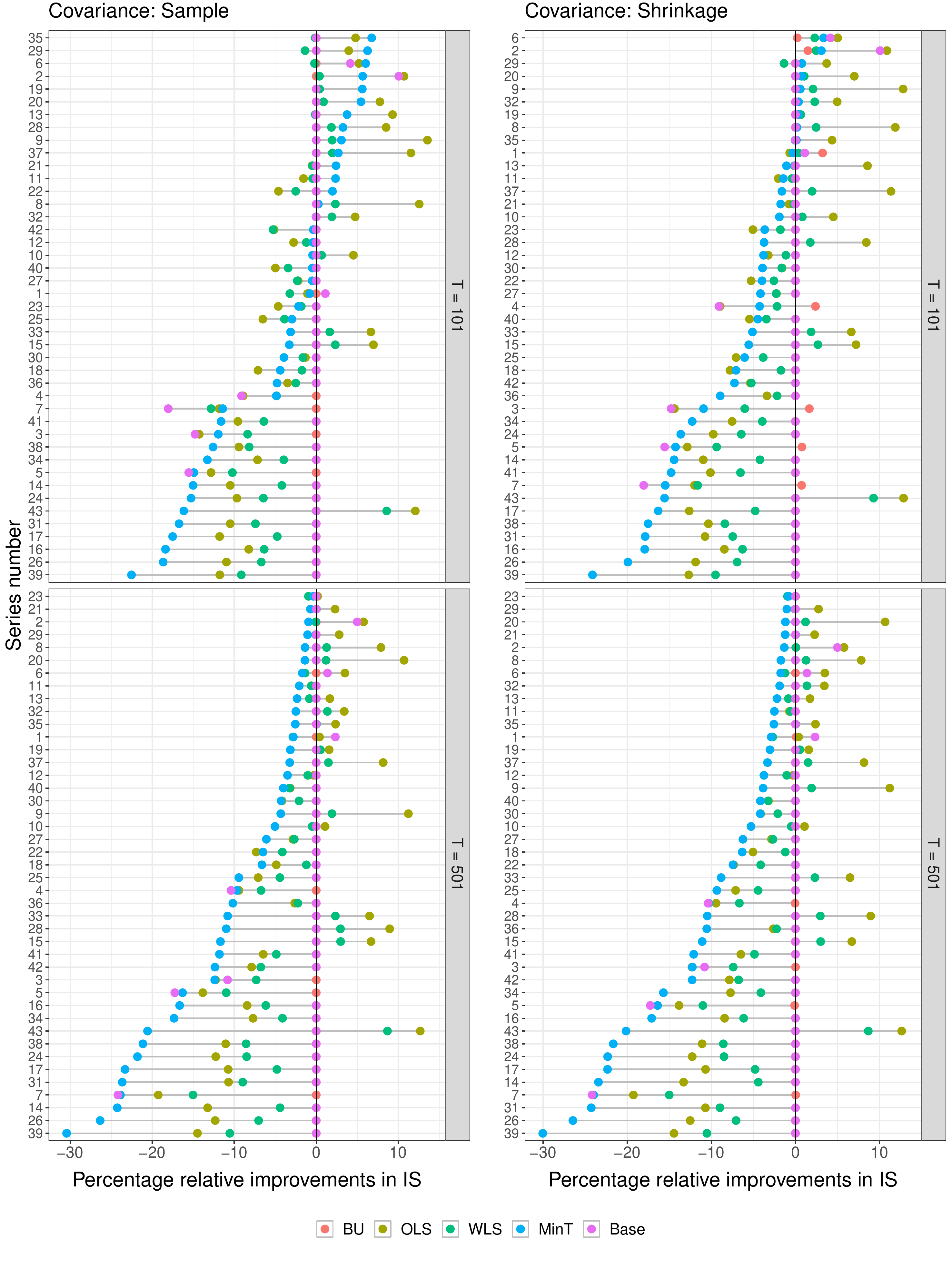}
	\caption{Percentage relative improvements in the interval score (IS) for the sample (shown in the left panel) and shrinkage (shown in the right panel) covariance estimators for each series in the structure. The contemporaneous error correlations are on the interval $(-1, 1)$. The series are sorted according to the performance of MinT. We set $\alpha = 0.05$.}
\end{figure}

\begin{figure}[!ht]
	\includegraphics[width=.95\textwidth]{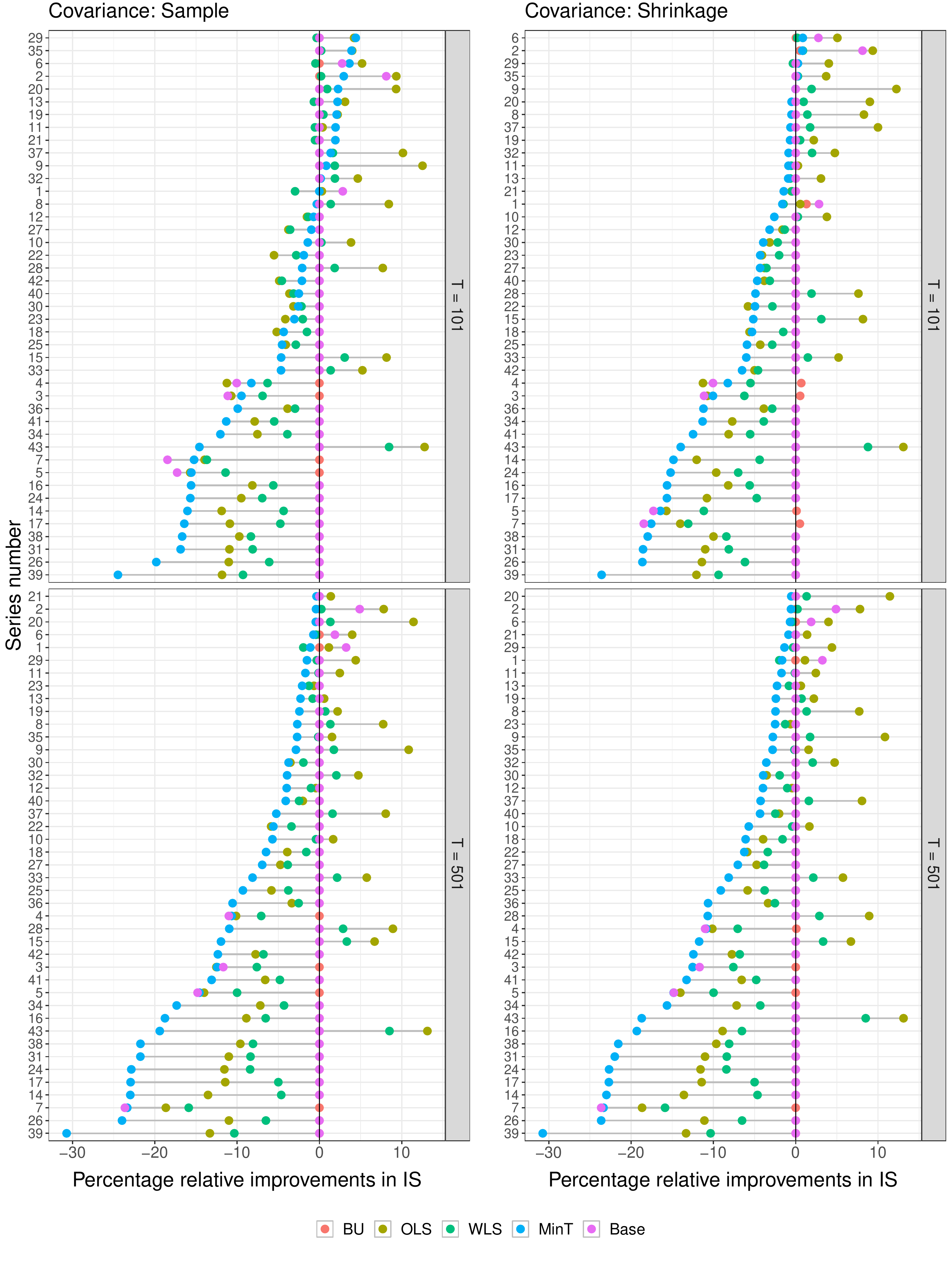}
	\caption{Percentage relative improvements in the interval score (IS) for the sample (shown in the left panel) and shrinkage (shown in the right panel) covariance estimators for each series in the structure. The contemporaneous error correlations are on the interval $(-1, 1)$. The series are sorted according to the performance of MinT. We set $\alpha = 0.10$.}
\end{figure}

\begin{figure}[!ht]
	\includegraphics[width=.95\textwidth]{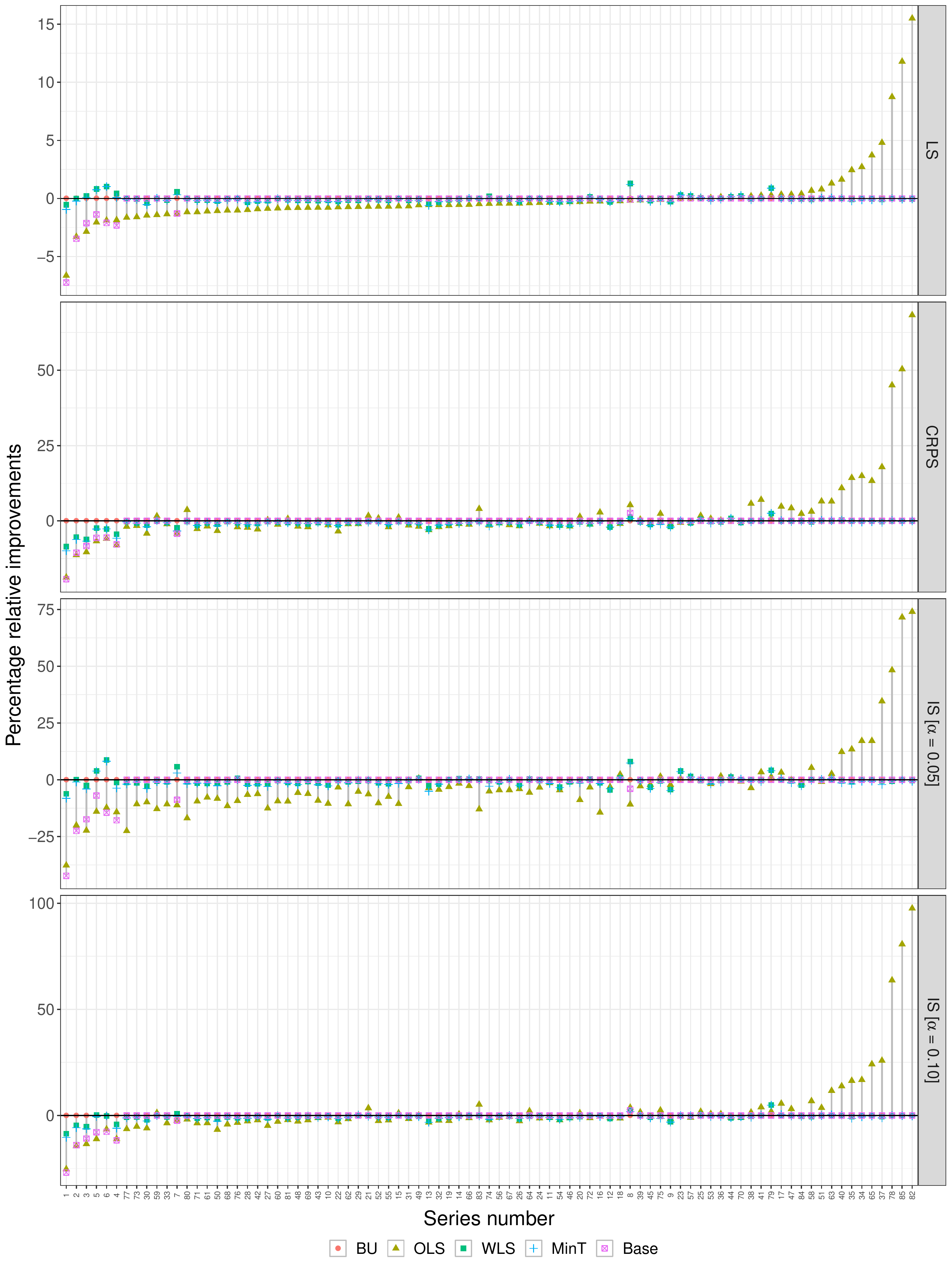}
	\caption{Percentage relative improvements in predictive accuracy evaluated using univariate scoring rules for each series in the structure. The univariate scoring rules used are the logarithmic score (LS), continuous ranked probability score (CRPS) and interval score (IS). The base predictive distributions are obtained from ETS models. The series are sorted according to the performance of the OLS method evaluated using the logarithmic score.}
\end{figure}

\clearpage

\section{Description of the data}
\label{apx:data}

\setstretch{1.07}
\begin{table}[h]
	\caption{\label{tbl:tourism-regions} Geographic division of Australia.}
	\centering\fontsize{8}{9}\rm
	\begin{tabular}{rllrll}
		\toprule
		\textbf{Series} & \textbf{Name} & \textbf{Label} & \textbf{Series} & \textbf{Name} & \textbf{Label} \\ \midrule
		\multicolumn{3}{l}{\textit{Total}} & \multicolumn{3}{l}{\textit{Regions continued}} \\
		1 & Australia & Total 						& 56 &  Outback Queensland 					& CDB \\ \cmidrule{1-3}
		\multicolumn{3}{l}{\textit{States}} 		& 57 &  Adelaide 							& DAA \\
		2 & NSW 	  					& A 		& 58 &  Barossa 							& DAB \\
		3 & VIC 	  					& B 		& 59 &  Adelaide Hills 						& DAC \\
		4 & QLD 	  					& C 		& 60 &  Limestone Coast 					& DBA \\
		5 & SA 		  					& D 		& 61 &  Fleurieu Peninsula 					& DBB \\
		6 & WA 		  					& E 		& 62 &  Murray River, Lakes and Coorong 	& DBC \\
		7 & TAS 	  					& F 		& 63 &  Kangaroo Island 					& DBD \\
		8 & NT 		  					& G 		& 64 &  Riverland 							& DCA \\ \cmidrule{1-3}
		\multicolumn{3}{l}{\textit{Regions}} 		& 65 &  Clare Valley 						& DCB \\
		9 & Sydney 						& AAA 		& 66 &  Flinders Range and Outback 			& DCC \\
		10 & Central Coast 				& AAB 		& 67 &  Eyre Peninsula 						& DDA \\
		11 & Hunter 					& ABA 		& 68 &  Yorke Peninsula 					& DDB \\
		12 & North Coast NSW 			& ABB 		& 69 &  Australia's Coral Coast 			& EAA \\
		13 & South Coast 				& ACA 		& 70 &  Destination Perth 					& EAB \\
		14 & Snowy Mountains 			& ADA 		& 71 &  Australia's South West 				& EAC \\
		15 & Capital Country 			& ADB 		& 72 &  Australia's North West 				& EBA \\
		16 & The Murray 				& ADC 		& 73 &  Australia's Golden Outback 			& ECA \\
		17 & Riverina 					& ADD 		& 74 &  Hobart and the South 				& FAA \\
		18 & Central NSW 				& AEA 		& 75 &  East Coast 							& FBA \\
		19 & New England North West 	& AEB 		& 76 &  Launceston and the North 			& FBB \\
		20 & Outback NSW 				& AEC 		& 77 &  North West 							& FCA \\
		21 & Blue Mountains 			& AED 		& 78 &  West Coast 							& FCB \\
		22 & Canberra 					& AFA 		& 79 &  Darwin 								& GAA \\
		23 & Melbourne 					& BAA 		& 80 &  Litchfield Kakadu Arnhem 			& GAB \\
		24 & Peninsula 					& BAB 		& 81 &  Katherine Daly 						& GAC \\
		25 & Geelong and the Bellarine 	& BAC 		& 82 &  Barkly 								& GBA \\
		26 & Great Ocean Road 			& BBA 		& 83 &  Lasseter 							& GBB \\
		27 & Lakes 						& BCA 		& 84 &  Alice Springs 						& GBC \\
		28 & Gippsland 					& BCB 		& 85 &  MacDonnell 							& GBD \\
		29 & Phillip Island 			& BCC		& &  & \\
		30 & Central Murray 			& BDA 		& &  & \\
		31 & Goulburn 					& BDB		& &  & \\
		32 & High Country 				& BDC 		& &  & \\
		33 & Melbourne East 			& BDD 		& &  & \\
		34 & Upper Yarra 				& BDE 		& &  & \\
		35 & Murray East 				& BDF 		& &  & \\
		36 & Mallee 					& BEA 		& &  & \\
		37 & Wimmera 					& BEB 		& &  & \\
		38 & Western Grampians 			& BEC 		& &  & \\
		39 & Bendigo Loddon 			& BED 		& &  & \\
		40 & Macedon 					& BEE 		& &  & \\
		41 & Spa Country 				& BEF 		& &  & \\
		42 & Ballarat 					& BEG 		& & & \\
		43 & Central Highlands 			& BEH 		& &  & \\
		44 & Gold Coast 				& CAA 		& &  & \\
		45 & Brisbane 					& CAB 		& &  & \\
		46 & Sunshine Coast 			& CAC 		& &  & \\
		47 & Bundaberg 					& CBA 		& &  & \\
		48 & Fraser Coast 				& CBB 		& &  & \\
		49 & Mackay 					& CBC 		& &  & \\
		50 & Capricorn 					& CBD 		& &  & \\
		51 & Gladstone 					& CBE 		& &  & \\
		52 & Whitsundays 				& CCA 		& &  & \\
		53 & Townsville 				& CCB 		& &  & \\
		54 & Tropical North Queensland 	& CCC 		& &  & \\
		55 & South Queensland Country 	& CDA 		& &  &  \\
		\bottomrule
	\end{tabular}
\end{table}

\newpage

\printbibliography

\end{document}